\newcommand{\lyxdot}{.}
\newtheorem{theorem}{Theorem}
\newtheorem{corollary}{Corollary}
\newtheorem{lemma}{Lemma}
\newtheorem{assumption}{Assumption}
\newcommand{\expect}[1]{\mathbb{E}\left\{#1\right\}}
\begin{document}

\title{Dynamic Service Migration in Mobile Edge Computing Based on Markov Decision Process
}

\author{\IEEEauthorblockN{Shiqiang Wang, Rahul Urgaonkar, Murtaza Zafer, Ting He, Kevin Chan, Kin K. Leung}
\thanks{
This research was sponsored in part by the U.S. Army Research Laboratory and the U.K. Ministry of Defence and was accomplished under Agreement Number W911NF-06-3-0001 and W911NF-16-3-0001. The views and conclusions contained in this document are those of the author(s) and should not be interpreted as representing the official policies, either expressed or implied, of the U.S. Army Research Laboratory, the U.S. Government, the U.K. Ministry of Defence or the U.K. Government. The U.S. and U.K. Governments are authorized to reproduce and distribute reprints for Government purposes notwithstanding any copyright notation hereon.

S. Wang is with IBM T. J. Watson Research Center, Yorktown Heights, NY, United States, Email: wangshiq@us.ibm.com

R. Urgaonkar is with Amazon Inc., Seattle, WA, United States. Email:
rahul.urgaonkar@gmail.com

M. Zafer is with Nyansa Inc., Palo Alto, CA, United States, Email: murtaza.zafer@gmail.com

T. He is with Pennsylvania State University, University Park, PA, USA. Email: t.he@cse.psu.edu

K. Chan is with Army Research Laboratory, Adelphi, MD, United States, Email: kevin.s.chan.civ@mail.mil

K. K. Leung is with the Department of Electrical and Electronic Engineering, Imperial College London, United Kingdom, Email:  kin.leung@imperial.ac.uk

This paper has been accepted for publication in the IEEE/ACM Transactions on Networking.
A preliminary version of this paper was presented at IFIP Networking 2015~\cite{migrationMain}. Part of this work also appeared in S. Wang's Ph.D. thesis\cite{wang2015dynamic}.
}
\vspace{-0.3in}
}

\maketitle

\begin{abstract}
In mobile edge computing, local edge servers can host cloud-based services, which reduces network overhead and latency but requires service migrations as users move to new locations. It is challenging to make migration decisions optimally because of the uncertainty in such a dynamic cloud environment. In this paper, we formulate the service migration problem as a Markov Decision Process (MDP). Our formulation captures general cost models and provides a mathematical framework to design optimal service migration policies. In order to overcome the complexity associated with computing the optimal policy, we approximate the underlying state space by the distance between the user and service locations. We show that the resulting MDP is exact for uniform one-dimensional user mobility while it provides a close approximation for uniform two-dimensional mobility with a constant additive error. We also propose a new algorithm and a numerical technique for computing the optimal solution which is significantly faster than traditional methods based on standard value or policy iteration. We illustrate the application of our solution in practical scenarios where many theoretical assumptions are relaxed. Our evaluations based on real-world mobility traces of San Francisco taxis show superior performance of the proposed solution compared to baseline solutions. 
\end{abstract}
\begin{IEEEkeywords}
Mobile edge computing (MEC), Markov decision process (MDP),
mobility, optimization
\end{IEEEkeywords}

\section{Introduction}

\label{section:intro}

Mobile applications that utilize cloud computing technologies have become increasingly popular over the recent years, with examples including data streaming, real-time video processing, social networking, etc. Such applications generally consist of a front-end component running on the mobile device and a back-end component running on the cloud \cite{dinh2013survey}, where the cloud provides additional data processing and computational capabilities. With this architecture, it is possible to access complex services from handheld devices that have limited processing power. However, it also introduces new challenges including increased network overhead and access delay to services. 

\emph{Mobile edge computing (MEC)} has recently emerged as a promising technique to address these challenges by moving computation closer to users \cite{SatyanarayananEmergenceEdgeComputing,mach2017survey,mao2017survey}. In MEC, a small number of servers or micro data-centers that can host cloud applications are distributed across the network and connected directly to entities (such as cellular base stations or wireless access points) at the network edge, as shown in Fig. \ref{fig:scenario}. MEC can significantly reduce the service access delay \cite{SatyanarayananEmergenceEdgeComputing}, thereby enabling newly emerging delay-sensitive and data-intensive mobile applications such as augmented reality (AR) and virtual reality (VR)~\cite{zhang2017VR}.
This idea received significant academic and commercial interest recently~\cite{mobileCloudConverge,IBMWhitepaper}.  MEC is also more robust than traditional centralized cloud computing systems~\cite{CloudletHostile}, because the edge servers (ES) are distributed and are thus less impacted by failures at a centralized location.
The MEC concept is also known as cloudlet \cite{CloudletHostile}, edge cloud~\cite{MachenMigration2017}, fog computing~\cite{bonomi2012fog}, follow me cloud~\cite{FollowMeMagazine}, micro cloud \cite{wangPredictedCost2017, SelimiServicePlacement2017}, and small cell cloud~\cite{BecvarPIMRC2014}.

\begin{figure}
\center{\includegraphics[width=0.8\linewidth]{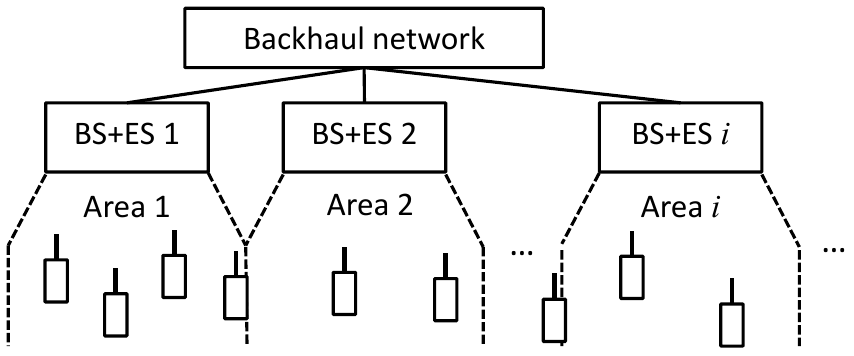}}

\protect\caption{Application scenario of mobile edge computing where  edge servers (ES) are co-located with base stations (BS).}
\label{fig:scenario}
\end{figure}

One new problem that MEC brings in is dynamic service placement and migration. As a user moves across different geographical locations, its service may need to be migrated to follow the user so that the benefits of MEC are maintained. The question is \emph{when and where to migrate the service}. Migrating a service may incur service interruption and network overhead, whereas not migrating a service may increase the data transmission delay between the user and the ES that hosts its service when the user moves away from its original location.
It is challenging to make an optimal migration decision because of the uncertainty in user mobility as well as the complex trade-off between the ``costs'' related to migration and distant data transmission.

The performance of MEC in the presence of user mobility is first studied in \cite{FollowMeGC2013} using a Markovian mobility model, but decisions on whether and where to migrate the service are not considered. A preliminary work on mobility-driven service migration based on Markov Decision Processes (MDPs) is given in \cite{MDPFollowMeICC2014}, which mainly considers one-dimensional (\mbox{1-D}) mobility patterns with a specific cost function. Standard solution procedures are used to solve this MDP, which can be time consuming when the MDP has a large number of states. 
Because the cost functions and transition probabilities of the MDP may change over time and the ES processing power is limited, it is desirable to solve the MDP in an effective manner. 
With this motivation, a more efficient solution to the 1-D mobility case was proposed in \cite{wang2014milcom}, where the transmission and migration costs are assumed to be constant whenever transmission/migration occurs. 
To the best of our knowledge, two-dimensional (\mbox{2-D}) mobility in an MDP setting of the service migration problem has not been considered in the literature, which is a much more realistic case compared to 1-D mobility and we consider it in this paper. 

In this paper, we use the MDP framework to study service migration in MEC. We provide novel contributions beyond \cite{MDPFollowMeICC2014} and \cite{wang2014milcom}, by considering general cost models, 2-D user mobility, and application to real-world mobility traces. 
We focus on the case where ESs are co-located with base stations (BS)\footnote{The notion of base station (BS) in this paper can refer to a cellular tower, a wireless access point, or any physical entity that can have an ES attached to it. We do not distinguish among them for simplicity.} in this paper, which is a possible configuration option according to a recently established MEC specification group~\cite{MECWhitePaper} and this setting has also been proposed for commercial products~\cite{IBMWhitepaper}.
However, our proposed solution is not restricted to such cases and can be easily extended to more general scenarios as long as the costs are location-dependent (see Section \ref{sec:costDefInProbFormulation} for cost definitions).
Our main contributions are summarized as follows.

\begin{enumerate}
\item Our formulation captures general cost models and provides a mathematical framework to design optimal service migration policies.
We note that the resulting problem becomes difficult to solve due to the large state space. In order to overcome this challenge, we propose an approximation of the underlying state space by defining the states as the \emph{distance} between the user and the service locations\footnote{Throughout this paper, we mean by \emph{user location} the location of the BS that the user is associated to.}. This approximation becomes exact for uniform 1-D mobility\footnote{The 1-D mobility is an important practical scenario often encountered in transportation networks, such as vehicles moving along a road.}. We prove several structural properties of the distance-based MDP, which includes a closed-form solution to the discounted sum cost. We leverage these properties to develop an algorithm for computing the optimal policy, which reduces the complexity from $O(N^3)$ (by policy iteration \cite[Chapter 6]{puterman2009markov}) to $O(N^2)$, where the number of states in the distance-based MDP is $N+1$.

\item We show how to use the distance-based MDP to approximate the solution for 2-D mobility models, which allows us to efficiently compute a service migration policy for 2-D mobility. For the uniform 2-D mobility, the approximation error is bounded by a constant. Simulation results comparing our approximation solution to the optimal solution (where the optimal solution is obtained from a 2-D MDP directly) suggest that the proposed approach performs very close to optimal and  obtains the solution much faster.

\item We demonstrate how to apply our algorithms in a practical scenario driven by real mobility traces of taxis in San Francisco which involve multiple users and services. The practical scenario includes realistic factors, e.g., not every BS has an ES attached to it, and each ES can only host a limited number of services. We compare the proposed policy with several baseline strategies that include myopic, never-migrate, and always-migrate policies. It is shown that the proposed approach offers significant gains over these baseline approaches.

\end{enumerate}

The remainder of this paper is organized as follows.  Section~\ref{sec:relatedWork} summarizes the related work. Section~\ref{section:problem_formulation} describes the problem formulation. The distance-based MDP model and its optimal policy is discussed in Section~\ref{section:1D_algorithm}. Section~\ref{section:2D1DApprox} focuses on using the distance-based MDP to solve problems with \mbox{2-D} mobility. Section~\ref{section:RealWorldTraces} discusses the application to real-world scenarios. 
Section~\ref{section:Discussions} provides some additional discussions and
Section~\ref{section:Conclusions} draws conclusions.

\section{Related Work}
\label{sec:relatedWork}

Existing work on service migration focuses on workload and energy patterns that vary slowly \cite{migrationTraditionalCloud1,migrationTraditionalCloud2}. In MEC, user mobility is the driving factor of migration, which varies much faster than parameters in conventional clouds.

Service migration in MEC has been studied in an MDP setting in \cite{MDPFollowMeICC2014,wang2014milcom}, as mentioned earlier.
Besides using the MDP framework, it has also been studied in other settings very recently. The work in \cite{wangPredictedCost2017} relies on a separate prediction module that can predict the future costs of each individual user running its service in every possible ES. A migration mechanism based on user mobility prediction utilizing low level channel information was proposed in \cite{Plachy2016}. Perfect knowledge of user mobility within a given time frame is assumed in \cite{CeselliToN2017}. These approaches are difficult to implement in practice because they require a detailed tracking of every user over a long time duration and may also require access to physical-layer information. In contrast, the approach we propose in this paper only requires knowledge on the number of users at and leaving a BS in each timeslot.

Other work on MEC service migration assumes no knowledge on user mobility, but their applicable scenarios are more limited. In \cite{hou2016asymptotically}, an online algorithm was proposed for service migration from a remote cloud to an ES, where only a single ES is considered and does not apply to cases where users move across areas close to different ESs. An online resource allocation and migration mechanism was proposed in \cite{WangICDCS2017}, where it is assumed that the computational workloads (code and data) are ``fluid'' and can be split up into infinitely small pieces. This assumption generally does not hold in practice, because usually a computer program can only be separated in a small number of ways and needs to have a minimal size of data to run. The work in \cite{urgaonkar2015performance} considers non-realtime applications that allow the queueing of user requests before they are served. This is not applicable for steaming applications such as real-time AR/VR. In this paper, we consider the case where each user continuously accesses its service without queueing, and focus on user mobility and realistic fix-sized computational entities.

Due to the importance of service migration triggered by user and system dynamics in MEC, recent work has also focused on the implementation of service migration mechanisms. Effective migration mechanisms of virtual machines and containers in an MEC environment were proposed in \cite{ha2015adaptive,ma2017efficient}, which did not study the decision making of when/where to migrate and imply an ``always migrate'' mechanism where the service always follows the user. These migration methods can work together with migration decision algorithms, such as the one we propose in this paper, as suggested in \cite{MachenMigration2017}. Other work focuses on developing protocols for MEC service migration \cite{Saurez2016Migration,zhang2017VR}. In \cite{Saurez2016Migration}, a simple thresholding method for migration decision making, which only looks at the system state at the current time (thus ``myopic''), was proposed, while suggesting that other migration decision algorithms can be plugged into their framework as well.
In \cite{zhang2017VR}, a standard MDP approach where the state space is polynomial in the total number of BS and ES was applied for migration decision making, which can become easily intractable when the number of BS and ES is large. It is also mentioned in \cite{zhang2017VR} that it is important to reduce the complexity of finding migration decisions. The complexity of our proposed approach in this paper does \emph{not} depend on the number of BS and ES, and the state space of the MDP in this paper is much smaller than that in \cite{zhang2017VR}.

A related area of work relevant to user mobility studies handover policies in the context of cellular networks \cite{handoverSurvey}. However, the notion of service migration is very different from cellular handover. Handover is usually decided by signal strengths from different BSs, and a handover must occur if a user's signal is no longer provided satisfactorily by its original BS. In the service migration context, a user may continue receiving service from an ES even if it is no longer associated with that BS, because the user can communicate with a remote ES via its currently associated BS and the backhaul network. As a result, the service for a particular user can potentially be placed on any ES, and the service migration problem has a much larger decision space than the cellular handover problem.

\section{Problem Formulation}
\label{section:problem_formulation}

Consider a mobile user in a 2-D geographical area that accesses a cloud-based service hosted on the ESs. The set of possible locations is given by $\mathcal{L}$, where $\mathcal{L}$ is assumed to be finite (but arbitrarily large). We consider a time-slotted model (see Fig. \ref{fig:timing}) where the user's location remains fixed
for the duration of one slot and changes from one slot to the next according to a Markovian mobility model. 
The time-slotted model can be regarded
as a sampled version of a continuous-time model, and the sampling
can be performed either at equal or non-equal intervals over time. In addition, we assume that each location $l \in \mathcal{L}$ is associated with an ES that can host the service for the user (this assumption will be relaxed in Section~\ref{section:RealWorldTraces}). The locations in $\mathcal{L}$ are represented as \mbox{2-D} vectors and there exists a distance metric $\Vert l_1 - l_2 \Vert$ that can be used to calculate the distance between locations $l_1$ and $l_2$. Note that the distance metric may not be Euclidean distance.
An example of this model is a cellular network in which the user's location is considered as the location of its current BS and the ESs are co-located with the BSs. As shown in Section \ref{section:2D1DApprox}, these locations can be represented as 2-D vectors $(i, j)$ with respect to a reference location (represented by  $(0, 0)$) and the distance between any two locations can be calculated in terms of the number of hops to reach from one cell to another cell. We denote the user and service locations at timeslot $t$ as $u(t)$ and $h(t)$ respectively.

\begin{figure}
\center{\includegraphics[width=1\linewidth]{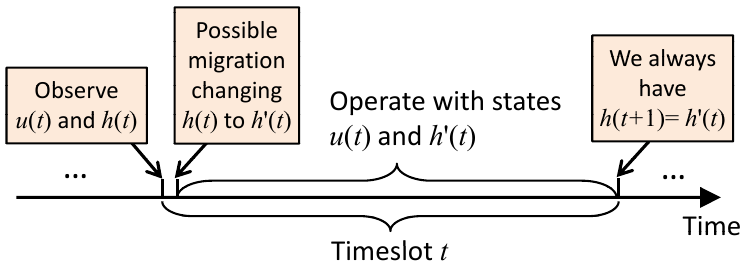}}

\protect\caption{Timing of the proposed service migration mechanism.}
\label{fig:timing}
\end{figure}

\emph{Remark:} 
Although the problem is formulated for the case of a single user accessing a single service, our solution can be applied to manage services for multiple users. We will illustrate such an application in Section~\ref{section:RealWorldTraces}, where we also consider aspects including that ESs are only deployed at a subset of BSs and a limited number of services can be hosted at each ES.
We assume in this paper that different services are independent of each other, and one service serves a single user. The notion of ``service'' can also stand for an instance of a particular type of service, but we do not distinguish between services and instances in this paper for simplicity.

Table \ref{table:notations}   summarizes the main notations in this paper. 

\begin{table}
\small
\renewcommand{\arraystretch}{1.2}
\caption{Summary of main notations}
\label{table:notations}
{\footnotesize
\centering
\begin{tabularx}{\columnwidth}{>{\setlength\hsize{0.3\hsize}\centering}X X}
\hline
Notation & Description\\
\hline
 $\triangleq$ & Is defined to be equal to\\
 $\mathcal{L}$ & Set of locations \\
 $l$ & Location \\
$\Vert l_1 - l_2 \Vert$ & Distance between locations $l_1$ and $l_2$ \\
$u(t)$ & User location at timeslot $t$ \\
$h(t)$ & Service  location at timeslot $t$ \\
${b}(x)$ & Migration cost \\
${c}(y)$ & Transmission cost \\
$s(t)$ & Initial state at slot $t$ \\
$s'(t)$ & Intermediate state at slot $t$ \\
$\pi$ & Decision policy \\
$a(s)$ ($a^*(s)$) & (Optimal) action taken when system is in state $s(t)$ \\
$C_{a}(s)$ & Sum of migration and transmission costs when taking action $a(s)$ in slot~$t$ \\
$V(s_0)$ & Discounted sum cost when starting at state $s_0$ \\
$P{[{s'_0, s_1}]}$ & Transition probability from intermediate state $s'_0$ to the next initial state $s_1$ (in the next slot) \\
$\gamma$ & Discount factor of the MDP \\
$d(t)$ & User-service distance in slot $t$ before possible migration (state in the distance-based MDP)\\
$N$ & Number of states (excluding state zero) in the distance-based MDP \\
$p_0, p, q$ & Transition probabilities of the distance-based MDP (see Fig. \ref{fig:states1D}) \\
$\beta_{c}, \beta_{l}, \delta_{c}, $ $\delta_{l}, \mu, \theta$ & Parameters related to the constant-plus-exponential cost function (see (\ref{eq:costMigration}) and (\ref{eq:costTransmission})) \\
$A_k, B_k, D,$ $ H, m_1, m_2$ & Parameters related to the closed-form solution of the distance-based MDP (see (\ref{eq:finalSolution})--(\ref{eq:diffEquHConst})) \\
$\{n_{k}:k\geq0\}$ & Series of migration states
(i.e., all $n_k$ such that $a(n_{k})\neq n_{k}$) \\
$r$ & Transition probability to one of the neighbors in the 2-D model \\
$e(t)$ & Offset of the user from the service as a 2-D vector (state in the 2-D offset-based MDP)\\
\hline
\end{tabularx}
}
\end{table}

\subsection{Control Decisions and Costs}
\label{sec:costDefInProbFormulation}
At the beginning of each slot, the MEC controller can choose from one of the following control options:
\begin{enumerate}
\item Migrate the service from location $h(t)$ to some other location $h'(t) \in \mathcal{L}$. This incurs a \emph{migration cost} ${b}(x)$ that is assumed to be a 
non-decreasing function of $x$, where $x$ is the distance between $h(t)$ and $h'(t)$, i.e., $x=\Vert h(t)-h'(t)\Vert $. Once the migration is completed, the system operates under state $(u(t), h'(t))$.
The migration cost can capture the service interruption time of the migration process, as recent experimental work has shown that a non-zero interruption time exists whenever a migration occurs (i.e., the migration distance is non-zero) \cite{ha2015adaptive,ma2017efficient,MachenMigration2017}. The interruption time can increase with the migration distance due to increased propagation and switching delays of data transmission.
\item Do not migrate the service. In this case, we have $h'(t)=h(t)$ and the migration cost is ${b}(0)=0$.
\end{enumerate}

In addition to the migration cost, there is a data \emph{transmission cost} incurred by the user for connecting to the currently active service instance. The transmission cost is related to the distance between the service and the user after possible migration, and it is defined as a general non-decreasing function ${c}(y)$, where $y=\Vert u(t) - h'(t) \Vert$. The transmission cost can capture the delay of data transmission, where a high delay increases the service response time. As discussed in \cite{SatyanarayananEmergenceEdgeComputing,CeselliToN2017,ha2015adaptive,XuCloudlet2016}, the delay is usually a function of the geographical or topological distance between two nodes and it increases with the distance. We set ${c}(0)=0$.

We assume that the transmission delay and the service interruption time caused by migration is much smaller than the length of each timeslot (see Fig. \ref{fig:timing}), thus the costs do not change with the timeslot length.

\subsection{Performance Objective}
Let us denote the overall system state at the beginning of each timeslot (before possible migration) by $s(t) = (u(t), h(t))$. The state $s(t)$ is named as the \emph{initial state} of slot $t$.
Consider any policy\footnote{A policy represents a decision rule that maps a state to a new state while (possibly) incurring a cost.} $\pi$ that makes control decisions based on the state $s(t)$ of the system, and we use $a_{\pi}(s(t))$ to represent the control action taken when the system is in state $s(t)$. This action causes the system to transition to a new \emph{intermediate state} $s'(t)= (u(t), h'(t))=a_{\pi}(s(t))$. We use $C_{a_\pi}(s(t))$ to denote the sum of migration and transmission costs incurred by a control $a_\pi(s(t))$ in slot $t$, and we have $C_{a_\pi}(s(t))={b}(\Vert h(t)-h'(t) \Vert)+{c}(\Vert u(t) - h'(t) \Vert)$. Starting from any initial state $ s(0) = s_0$, the long-term expected \emph{discounted sum cost} incurred by policy $\pi$ is given by
\begin{align}
V_{\pi}(s_0)=\lim_{t \to \infty} \expect{\sum_{\tau=0}^{t}\gamma^{\tau}C_{a_\pi}(s(\tau)) \Bigg|s(0) = s_0}
\label{eq:discountedSumCost}
\end{align}
where $0 < \gamma < 1$ is a discount factor.
Note that we consider deterministic policies in this paper and the expectation is taken over random user locations.

Our objective is to design a control policy that minimizes the long-term expected discounted sum total cost starting from any initial state, i.e.,
\begin{align}
V^*(s_0)=\min_{\pi}  V_{\pi}(s_0) \;\;\; \forall s_0.
\label{eq:objFunc}
\end{align}
This problem falls within the class of MDPs with infinite horizon discounted cost. It is well known that the optimal solution is given by a stationary policy\footnote{A stationary policy is a policy where the same decision rule is used in each timeslot.} and can be obtained as the unique solution to the Bellman's equation~\cite{puterman2009markov}:
\begin{align}
V^*(s_0) = \min_a & \Big\{ C_a(s_0) + \gamma \sum_{s_1 \in \mathcal{L}\times\mathcal{L}} P{[{a(s_0),s_1}]} \cdot V^*(s_1) \Big\}
\label{eq:bellman}
\end{align}
where $P{[{a(s_0),s_1}]}$ denotes the probability of transitioning from state $s'(0)=s'_0=a(s_0)$ to $s(1)=s_1$. Note that the intermediate state $s'(t)$ has no randomness when $s(t)$ and $a(\cdot)$ are given, thus we only consider the transition probability from $s'(t)$ to the next state $s(t+1) =(u(t+1), h'(t)) = (u(t+1), h(t+1))$ in (\ref{eq:bellman}), where we note that we always have $h(t+1)=h'(t)$.

\subsection{Characteristics of Optimal Policy}
We next characterize some structural properties of the optimal solution. The following theorem states that it is not optimal to migrate the service
to a location that is farther away from the user than the current service location, as one would intuitively expect.

\begin{theorem}
\label{theorem:notMigrateToFurther} Let $a^*(s)=(u,h')$ denote the optimal action at any state $s=(u,h)$. Then, we have $\Vert u-h'\Vert \leq \Vert u-h\Vert $. (If the optimal action is not unique, then there exists at least one such optimal action.)
\end{theorem}
\begin{proof}
See Appendix \ref{sec:proofOfTheoremNotMigrateFurther}.
\end{proof}

\begin{corollary}
\label{corollary:constant_cost} If ${b}(x)$ and ${c}(y)$ are both \emph{constants} (possibly of different values) for $x,y>0$, and ${b}(0)<{b}(x)$ and ${c}(0)<{c}(y)$ for $x,y>0$, then migrating to locations other than the current location of the mobile user is not optimal.
\end{corollary}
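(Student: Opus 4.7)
\emph{Proof plan.} The plan is to split on whether $h=u$. If $h=u$, Lemma~\ref{lemma:notMigrateToFurther} already forces the optimal $h' = u$, so only the case $h \neq u$ requires argument. In that case I would rule out any candidate migration target $h^* \notin \{u, h\}$ by showing it is strictly dominated in the Bellman equation by the action ``do not migrate'' ($h' = h$).

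The first step is a one-migration bound on the optimal value function in its second argument, namely $|V^*(u, h_1) - V^*(u, h_2)| \leq c_m$ for any $u, h_1, h_2 \in \mathcal{L}$, where $c_m$ denotes the constant value of $c_m(x)$ for $x > 0$. The derivation is a simple exchange: if $h_2^*$ is an optimal action at $(u, h_2)$, so that
\[
V^*(u, h_2) = c_m(\Vert h_2 - h_2^* \Vert) + c_d(\Vert u - h_2^* \Vert) + \gamma \, \expect{V^*(u_1, h_2^*) \mid u},
\]
then copying the same action from $(u, h_1)$ upper bounds $V^*(u, h_1)$ by the same expression except with $c_m(\Vert h_1 - h_2^* \Vert)$ in place of $c_m(\Vert h_2 - h_2^* \Vert)$. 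Under the constant-cost assumption, the difference of these two quantities lies in $\{-c_m, 0, c_m\}$; swapping the roles of $h_1$ and $h_2$ gives the two-sided bound.

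The second step plugs this bound into the Bellman equation at initial state $(u, h)$ with $h \neq u$. For any $h^* \notin \{u, h\}$ both $\Vert h - h^* \Vert$ and $\Vert u - h^* \Vert$ are positive, so the immediate cost of migrating to $h^*$ equals $c_m + c_d$, while the action $h' = h$ costs only $0 + c_d = c_d$ because $u \neq h$. Subtracting the two Bellman branches yields
\[
c_m + \gamma \, \expect{V^*(u_1, h^*) - V^*(u_1, h) \mid u} \,\geq\, c_m - \gamma c_m = c_m(1 - \gamma) \,>\, 0,
\]
where the inequality uses the bound from the first step. Hence $h^*$ is strictly worse than not migrating and cannot be optimal.

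The main obstacle I anticipate is establishing the one-migration bound. The argument relies crucially on both $c_m$ and $c_d$ being constant on $\{x>0\}$: constant $c_m$ is what allows the exchange to cost at most $c_m$ regardless of which service locations are involved, and constant $c_d$ is what makes the transmission-cost terms of the two Bellman branches equal and cancel cleanly. Once both facts are in place the remaining comparison is routine.
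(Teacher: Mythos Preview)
Your argument is correct. The one-migration bound $|V^*(u,h_1)-V^*(u,h_2)|\le c_m$ follows exactly from the exchange you describe, and plugging it into the Bellman comparison between ``migrate to $h^*\notin\{u,h\}$'' and ``do not migrate'' yields the strict gap $c_m(1-\gamma)>0$, which rules out $h^*$. One small sharpening: the one-migration bound itself only needs $c_m$ to be constant on $\{x>0\}$ (the transmission term $c_d(\Vert u-h_2^*\Vert)$ is identical on both sides of the exchange, so no assumption on $c_d$ is used there); the constancy of $c_d$ enters only in Step~2, exactly where you indicate, to equate $c_d(\Vert u-h^*\Vert)$ with $c_d(\Vert u-h\Vert)$.

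The paper defers its proof of this corollary to supplementary material, so a line-by-line comparison is not possible from the text provided. Given the placement immediately after Lemma~\ref{lemma:notMigrateToFurther}, the authors presumably lean more heavily on that lemma, whereas you invoke it only for the trivial case $h=u$ and otherwise give a self-contained value-function coupling. Your route is natural, complete, and arguably more transparent because it isolates the single inequality $|V^*(u,h_1)-V^*(u,h_2)|\le c_m$ as the structural fact driving the result.
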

\begin{proof}
See Appendix \ref{sec:proofOfCorollaryConstantCost}.
\end{proof}

\subsection{Simplifying the Search Space}

Theorem \ref{theorem:notMigrateToFurther} simplifies the search space for the optimal policy considerably. However, it is still very challenging to derive the optimal control policy for the general model presented above, particularly when the state space $\{s(t)\}$ is large. One possible approach to address this challenge is to re-define the state space to represent only the \emph{distance} between the user and service locations $d(t)=\Vert u(t) - h(t) \Vert$. The motivation for this comes from the observation that the cost functions in our model only depend on the distance. Note that in general, the optimal control actions can be different for two states $s_0$ and $s_1$ that have the same user-service distance. However, it is reasonable to use the distance as an approximation of the state space for many practical scenarios of interest, and this simplification allows us to formulate a far more tractable MDP. We discuss the distance-based MDP in the next section, and show how the results on the distance-based MDP can be applied to 2-D mobility and real-world scenarios in Sections~\ref{section:2D1DApprox} and~\ref{section:RealWorldTraces}.

In the remainder of this paper, where there is no ambiguity, we reuse the notations $P$, $C_a(\cdot)$, $V(\cdot)$, and $a(\cdot)$ to respectively represent transition probabilities, one-timeslot costs, discounted sum costs, and actions of different MDPs.

\section{Optimal Policy for Distance-Based MDP}
\label{section:1D_algorithm}

In this section, we consider a distance-based\footnote{We assume that the distance is quantized, as it will be the case with the 2-D model discussed in later sections.} MDP where the states $\{d(t)\}$ represent the distances between the user and the service before possible migration (an example is shown in Fig. \ref{fig:states1D}), i.e., $d(t)=\Vert u(t) - h(t) \Vert$. We define the parameter $N$ as an application-specific maximum allowed distance, and we always perform migration when $d(t) \geq N$. We set the actions $a(d(t))=a(N)$ for $d(t)>N$, so that we only need to focus on the states $d(t)\in[0,N]$. After taking action $a(d(t))$, the system operates in the intermediate state $d'(t)=a(d(t))$, and the value of the next state $d(t+1)$ follows the transition probability $P{[{d'(t),d(t+1)}]}$ which is related to the mobility model of the user.
To simplify the solution, we restrict the transition probabilities $P{[{d'(t),d(t+1)}]}$ according to the parameters $p_{0}$, $p$, and $q$ as shown in Fig. \ref{fig:states1D}. Such a restriction is sufficient when the underlying mobility model is a uniform 1-D random walk where the user moves one step to the left or right with equal probability $r_1$ and stays in the same location with probability $1-2r_1$, in which case we can set $p=q=r_1$ and $p_0=2r_1$. This model is also sufficient to approximate the uniform 2-D random walk model, as will be discussed in Section \ref{sub:approxMethodDescription}.

\begin{figure}
\center{\includegraphics[width=1\linewidth]{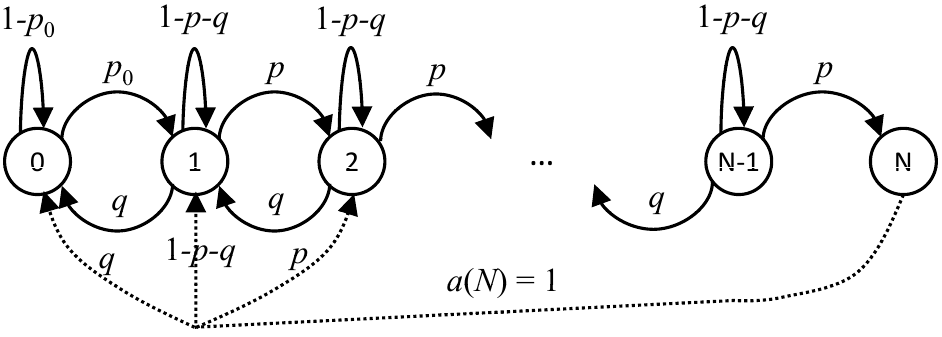}}

\protect\caption{Distance-based MDP with the distances $\{d(t)\}$ (before possible migration) as states. In this example, migration is only performed at state $N$, and only the possible action of $a(N)=1$ is shown for compactness. The solid lines denote state transitions without migration.}
\label{fig:states1D}
\end{figure}

For an action of $d'(t)=a(d(t))$, the new service location $h'(t)$ is chosen such that $x=\Vert h(t)-h'(t)\Vert =|d(t)-d'(t)|$ and $y=\Vert u(t) - h'(t) \Vert = d'(t)$. This means that migration happens along the shortest path that connects $u(t)$ and $h(t)$, and $h'(t)$ is on this shortest path (also note that $d'(t)\leq d(t)$ according to Theorem \ref{theorem:notMigrateToFurther}). Such a migration is possible for the 1-D case where $u(t)$, $h(t)$, and $h'(t)$ are all scalar values. It is also possible for the 2-D case if the distance metric is properly defined (see Section \ref{sub:approxMethodDescription}). The one-timeslot cost is then $C_{a}(d(t))={b}(| d(t)-d'(t)|)+{c}(d'(t))$.

\subsection{Constant-Plus-Exponential Cost Functions}

\label{sub:expCostDef}

To simplify the analysis later, we define the cost functions ${b}(x)$ and ${c}(y)$ in a constant-plus-exponential form:
\begin{equation}
\label{eq:costMigration}
{b}(x)=\begin{cases}
0, & \textrm{if }x=0\\
\beta_{c}+\beta_{l}\mu^{x}, & \textrm{if }x>0
\end{cases}
\end{equation}
\begin{equation}
\label{eq:costTransmission}
{c}(y)=\begin{cases}
0, & \textrm{if }y=0\\
\delta_{c}+\delta_{l}\theta^{y}, & \textrm{if }y>0
\end{cases}
\end{equation}
where $\beta_{c}$, $\beta_{l}$, $\delta_{c}$, $\delta_{l}$, $\mu$, and $\theta$ are real-valued parameters.

The cost functions defined above can have different shapes and are thus applicable to many realistic scenarios (see Fig.~\ref{fig:expCostExample}).
They can approximate an arbitrary cost function as discussed in Appendix~\ref{supSec:approxCostFunc}. 
For example, they can be defined such that there is a constant non-zero cost whenever the distance is larger than zero. Such a cost definition is applicable in systems where all ESs are connected through a single network hub, and it can also approximate cases where there is a relatively high cost whenever the distance larger than zero. The latter is found from experiments in \cite{ha2015adaptive}. The costs can also be defined in a way so that they are (approximately) linearly proportional to the distance, where the distance can be defined as the length of the shortest path between BSs, as in \cite{CeselliToN2017,XuCloudlet2016}.
They also have nice properties allowing us to obtain a closed-form solution to the discounted sum cost, based on which we design an efficient algorithm for finding the optimal policy.

\begin{figure}
\center{\includegraphics[width=0.8\linewidth]{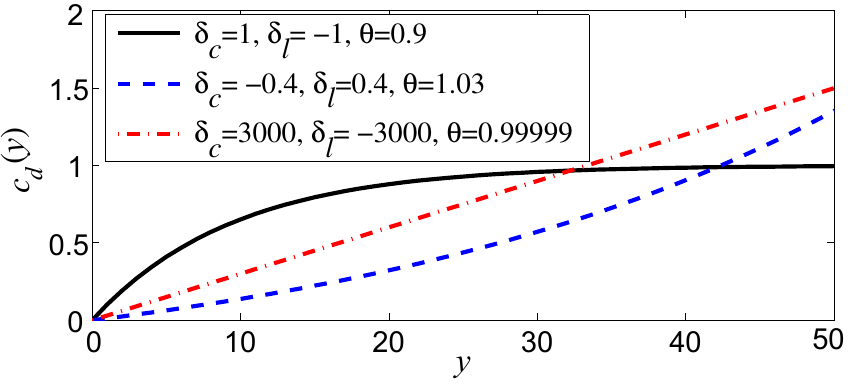}}
\protect\caption{Example of constant-plus-exponential cost function ${c}(y)$.}
\label{fig:expCostExample}
\end{figure}

The parameters $\beta_{c}$, $\beta_{l}$, $\delta_{c}$, $\delta_{l}$, $\mu$, and $\theta$
are selected such that ${b}(x)\geq0$, ${c}(y)\geq0$, and both
${b}(x)$ and ${c}(y)$ are non-decreasing respectively in $x$ and $y$ for $x,y\geq0$. 
Explicitly, we have $\mu\geq 0$; $\beta_l\leq 0$ when $\mu\leq 1$; $\beta_l\geq 0$ when $\mu\geq 1$; $\beta_{c}\geq-\beta_{l}$; $\theta\geq 0$; $\delta_l\leq 0$ when $\theta\leq 1$; $\delta_l\geq 0$ when $\theta\geq 1$; and $\delta_{c}\geq-\delta_{l}$.
The definition that ${b}(0)={c}(0)=0$ is for convenience, because a non-zero cost for $x=0$ or $y=0$ can be offset by the values of $\beta_{c}$ and $\delta_{c}$, thus setting ${b}(0)={c}(0)=0$ does not affect the optimal decision.

With this definition, the values of $\beta_c+\beta_l$ and $\delta_c+\delta_l$ can be regarded as constant terms of the costs, at least such an amount of cost is incurred when $x>0$ and $y>0$, respectively. The parameters $\mu$ and $\theta$ specify the impact of the distance $x$ and $y$, respectively, to the costs, and their values can be related to the network topology and routing mechanism of the network. The parameters $\beta_l$ and $\delta_l$ further adjust the costs proportionally.

\subsection{Closed-Form Solution to Discounted Sum Cost}

\subsubsection{Problem Formulation with Difference Equations}
From (\ref{eq:discountedSumCost}), we can get the following balance equation on the discounted sum cost for a given policy $\pi$:
\begin{align}
V(d) = C_{a}(d) + \gamma  \sum_{d(1) ={a}(d)-1}^{{a}(d)+1}  P{[{{a}(d),d(1)}]} \cdot V(d(1))
\label{eq:balanceDiscountedSumCost}
\end{align}
where we omit the subscript $\pi$ and write $d(0)$ as $d$ for short (we will also follow this convention in the following).

\begin{theorem}
\label{prop:diffEquSolution}
For a given policy $\pi$, let $\{n_{k}:k\geq0\}$ denote the series of all migration states
(such that $a(n_{k})\neq n_{k}$) as specified by policy $\pi$, where $0\leq n_{k}\leq N$.
The discounted sum cost $V(d)$ for $d\in[n_{k-1}, n_k]$ (where we define $n_{-1}\triangleq 0$ for convenience) when following policy $\pi$ can be expressed as
\begin{equation}
V(d)\!=\!A_{k}m_{1}^{d}+B_{k}m_{2}^{d}+D+\begin{cases}
H\cdot\theta^{d} &\!\! \textrm{if }1-\frac{\phi_{1}}{\theta}-\phi_{2}\theta\neq0\\
Hd\cdot\theta^{d} &\!\! \textrm{if }1-\frac{\phi_{1}}{\theta}-\phi_{2}\theta=0
\end{cases}\label{eq:finalSolution}
\end{equation}
where $A_{k}$ and $B_{k}$ are constants corresponding to the interval $[n_{k-1},n_{k}]$, the coefficients $m_1$, $m_2$, $D$, and $H$ are expressed as
\begin{equation}
m_{1}=\frac{1+\sqrt{1-4\phi_{1}\phi_{2}}}{2\phi_{2}},
m_{2}=\frac{1-\sqrt{1-4\phi_{1}\phi_{2}}}{2\phi_{2}}
\label{eq:diffEquMConst}
\end{equation}
\begin{equation}
D={\phi_{3}}\big/({1-\phi_{1}-\phi_{2}})
\label{eq:diffEquDConst}
\end{equation}
\begin{equation}
H=\begin{cases}
\frac{\phi_{4}}{1-\frac{\phi_{1}}{\theta}-\phi_{2}\theta} & \textrm{if }1-\frac{\phi_{1}}{\theta}-\phi_{2}\theta\neq0\\
\frac{\phi_{4}}{\frac{\phi_{1}}{\theta}-\phi_{2}\theta} & \textrm{if }1-\frac{\phi_{1}}{\theta}-\phi_{2}\theta=0
\end{cases}
\label{eq:diffEquHConst}
\end{equation}
where we define $\phi_{1}\triangleq\frac{\gamma q}{1-\gamma(1-p-q)}$, $\phi_{2}\triangleq\frac{\gamma p}{1-\gamma(1-p-q)}$,
$\phi_{3}\triangleq\frac{\delta_{c}}{1-\gamma(1-p-q)}$, and $\phi_{4}\triangleq\frac{\delta_{l}}{1-\gamma(1-p-q)}$.
\end{theorem}

\begin{proof} 
The proof is based on solving a difference equation~\cite{elaydi2005introductionDiffEqu} according to (\ref{eq:balanceDiscountedSumCost}), see Appendix \ref{sec:proofOfDiffEquSolution} for details.
\end{proof} 

We also note that for two different states $d_{1}$ and $d_{2}$, if  policy $\pi$ has actions $a(d_{1})=d_{2}$ and $a(d_{2})=d_{2}$,
then 
\begin{equation}
V(d_{1})={b}\left(\left|d_{1}-d_{2}\right|\right)+V(d_{2})\,.
\label{eq:costRelationship}
\end{equation}

\subsubsection{Finding the Coefficients}

The coefficients $A_{k}$ and $B_{k}$ are unknowns in the solution (\ref{eq:finalSolution}) that need to be
found using additional constraints. Their values may be different for different $k$. After $A_{k}$ and $B_{k}$ are
determined, (\ref{eq:finalSolution}) holds for all $d\in [0,N]$.

We assume $1-\frac{\phi_{1}}{\theta}-\phi_{2}\theta\neq0$ and
$1-\frac{\phi_{2}}{\theta}-\phi_{1}\theta\neq0$ in the following,
the other cases can be derived in a similar way and are omitted for brevity.

\emph{Coefficients for interval $ [0, n_{0}]$:} We have one constraint from the balance
equation (\ref{eq:balanceDiscountedSumCost}) for $d=0$, which is
\begin{equation}
\label{eq:constraints_n0_1_1}
V(0)=\gamma p_{0}V(1)+\gamma(1-p_{0})V(0).
\end{equation}
By substituting (\ref{eq:finalSolution}) into (\ref{eq:constraints_n0_1_1}), we get
\begin{equation}
A_{0}\!\left(1-\phi_{0}m_{1}\right)+B_{0}\!\left(1-\phi_{0}m_{2}\right)\!=\!D\!\left(\phi_{0}-1\right)+H\!\left(\phi_{0}\theta-1\right)\label{eq:constraints_n0_1}
\end{equation}
where $\phi_{0}\triangleq\frac{\gamma p_{0}}{1-\gamma\left(1-p_{0}\right)}$.
We have another constraint by substituting (\ref{eq:finalSolution}) into (\ref{eq:costRelationship}), which gives
\begin{align}
 & A_{0}\left(m_{1}^{n_{0}}-m_{1}^{a(n_{0})}\right)+B_{0}\left(m_{2}^{n_{0}}-m_{2}^{a(n_{0})}\right)\nonumber \\
 & =\beta_{c}+\beta_{l}\mu^{n_{0}-a(n_{0})}-H\left(\theta^{n_{0}}-\theta^{a(n_{0})}\right).\label{eq:constraints_n0_2}
\end{align}
We can find $A_{0}$ and $B_{0}$ from (\ref{eq:constraints_n0_1}) and (\ref{eq:constraints_n0_2}).

\emph{Coefficients for interval $ [n_{k-1}, n_{k}]$:}
Assume that we have found $V(d)$ for all $d\leq n_{k-1}$. By letting $d=n_{k-1}$ in (\ref{eq:finalSolution}), we have the first constraint given by
\begin{equation}
A_{k}m_{1}^{n_{k-1}}+B_{k}m_{2}^{n_{k-1}}=V(n_{k-1})-D-H\cdot\theta^{n_{k-1}}.\label{eq:constraints_ni_1}
\end{equation}
For the second constraint, we consider two cases. 
If $a(n_{k})\leq n_{k-1}$, then
\begin{align}
 & A_{k}m_{1}^{n_{k}}+B_{k}m_{2}^{n_{k}}\nonumber \\
 & =\beta_{c}+\beta_{l}\mu^{n_{k}-a(n_{k})}+V(a(n_{k}))-D-H\cdot\theta^{n_{k}}.\label{eq:constraints_ni_2_1}
\end{align}
If $n_{k-1}<a(n_{k})\leq n_{k}-1$, then
\begin{align}
 & A_{k}\left(m_{1}^{n_{k}}-m_{1}^{a(n_{k})}\right)+B_{k}\left(m_{2}^{n_{k}}-m_{2}^{a(n_{k})}\right)\nonumber \\
 & =\beta_{c}+\beta_{l}\mu^{n_{k}-a(n_{k})}-H\left(\theta^{n_{k}}-\theta^{a(n_{k})}\right).\label{eq:constraints_ni_2_2}
\end{align}
The values of $A_{k}$ and $B_{k}$ can be solved from (\ref{eq:constraints_ni_1}) together with either (\ref{eq:constraints_ni_2_1}) or (\ref{eq:constraints_ni_2_2}).

\subsubsection{Solution is in Closed-Form}
We note that $A_0$ and $B_0$ can be expressed in closed-form, and $A_k$ and $B_k$ for all $k$ can also be expressed in closed-form by substituting (\ref{eq:finalSolution}) into (\ref{eq:constraints_ni_1}) and (\ref{eq:constraints_ni_2_1}) where needed. Thus,  (\ref{eq:finalSolution}) is a \emph{closed-form solution} for all $d\in[0,N]$. Numerically, we can find $V(d)$ for all $d\in[0,N]$ in $O(N)$ time.

\subsection{Algorithm for Finding the Optimal Policy}

Standard approaches of solving for the optimal policy
of an MDP include value iteration and policy iteration \cite[Chapter 6]{puterman2009markov}. Value iteration finds the optimal policy from the Bellman's equation (\ref{eq:bellman}) iteratively, which may require a large number of iterations before converging to the optimal result. Policy iteration generally requires a smaller number of iterations, because, in each iteration, it finds the
exact values of the discounted sum cost $V(d)$  for the policy resulting from the previous
iteration, and performs the iteration based on the exact $V(d)$ values.
However, in general, the $V(d)$ values are found by solving
a system of linear equations, which has a complexity of $O(N^{3})$
when using Gaussian-elimination.

We propose a modified policy-iteration approach for finding the optimal policy, which uses the above result instead of Gaussian-elimination
to compute $V(d)$, and also only checks for migrating to lower states
or not migrating (according to Theorem \ref{theorem:notMigrateToFurther}). The algorithm is shown in Algorithm \ref{alg:modifiedPolicyIteration}, where Lines \ref{algStartFind_ni}--\ref{algEndFind_ni} find the values of $n_k$, Lines \ref{algStartFind_Values}--\ref{algEndFind_Values} find the discounted sum cost values, and Lines \ref{algStartFindOptPolicy}--\ref{algEndFindOptPolicy} update the optimal policy. 
The overall complexity for each iteration is $O\left(N^{2}\right)$ in Algorithm \ref{alg:modifiedPolicyIteration}, which reduces complexity because standard\footnote{We use the term ``standard'' here to distinguish with the modified
policy iteration mechanism proposed in Algorithm \ref{alg:modifiedPolicyIteration}. } policy iteration has complexity $O(N^{3})$, and the standard value iteration approach does not compute the exact value function in each iteration and generally has long convergence time.

\begin{algorithm}[t]
\caption{Modified policy-iteration algorithm based on difference equations}
\label{alg:modifiedPolicyIteration}
{\footnotesize

Initialize $a(d) \leftarrow 0$ for all $d={0,1,2,...,N}$;

Find constants $\phi_0$, $\phi_1$, $\phi_2$, $\phi_3$, $\phi_4$, $m_1$, $m_2$, $D$, and $H$;

\Repeat{$a_\textrm{prev}(d)=a(d)$ for all $d$}{ \label{AlgLargeLoopStart}

	$k \leftarrow 0$; \label{algStartFind_ni}

	\For {$d=1...N$} {
		\If {$a(d)\neq d$}{
			$n_k \leftarrow d$, $k \leftarrow k+1$;
		}
	} \label{algEndFind_ni}

	\For {all $n_k$}{  \label{algStartFind_Values}

		\If {$k=0$}{
			Solve for $A_{0}$ and $B_{0}$ from (\ref{eq:constraints_n0_1}) and (\ref{eq:constraints_n0_2});	
			
			Find $V(d)$ with $0\leq d\leq n_k$ from (\ref{eq:finalSolution}) with $A_{0}$ and $B_{0}$ found above;
		}
		\ElseIf{$k>0$}{
			\If {$a(n_{k})\leq n_{k-1}$}{
				Solve for $A_{k}$ and $B_{k}$ from (\ref{eq:constraints_ni_1}) and (\ref{eq:constraints_ni_2_1});
			}
			\Else{
				Solve for $A_{k}$ and $B_{k}$ from (\ref{eq:constraints_ni_1}) and (\ref{eq:constraints_ni_2_2});
			}
			Find $V(d)$ with $n_{k-1}<d\leq n_k$ from (\ref{eq:finalSolution}) with $A_{k}$ and $B_{k}$ found above;
		}

	} \label{algEndFind_Values}

	\For {$d=1...N$}{ \label{algStartFindOptPolicy}
		$a_\textrm{prev}(d) \leftarrow a(d)$;

		$a(d) \leftarrow \arg\min_{a\leq d}\left\{ C_{a}(d)+\gamma\sum_{j=a-1}^{a+1}P{[{a,j}]}\cdot V(j)\right\} $;
	} \label{algEndFindOptPolicy}

}
\Return $a^*(d) \leftarrow a(d)$ for all $d$;

}
\end{algorithm}

\section{Approximate Solution for 2-D Mobility}
\label{section:2D1DApprox}

In this section, we show that the distance-based MDP can be used to find a near-optimal service migration policy, where the user conforms to a uniform 2-D random walk mobility model in an infinite space. This mobility model can be used to approximate real-world mobility traces (see Section \ref{section:RealWorldTraces}). We consider a hexagonal cell structure, but the approximation procedure can be extended to other \mbox{2-D} mobility models (such as Manhattan grid) after modifications. The user is assumed to transition to one of its six neighbors at the beginning of each timeslot with probability $r$, and stay in the same cell with probability $1-6r$.

\subsection{Offset-Based MDP}
\label{sub:OffsetBasedMDP}

Define the \emph{offset} of the user from the service as a 2-D vector $e(t)=u(t)-h(t)$ (recall that $u(t)$ and $h(t)$ are also 2-D vectors). Due to the space-homogeneity of the mobility model, it is sufficient to model the state of the MDP by $e(t)$ rather than $s(t)$. The distance metric $\Vert l_1-l_2 \Vert$ is defined as the minimum number of hops that are needed to reach from cell $l_1$ to cell $l_2$ on the hexagon model.

\begin{figure}
\center{\includegraphics[width=0.47\linewidth]{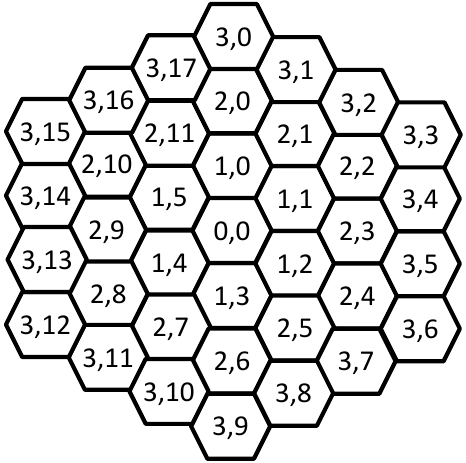}}

\protect\caption{Example of 2-D offset model on hexagon cells, where $N=3$.}
\label{fig:hexCell} 
\end{figure}

We name the states with the same value of $\Vert e(t) \Vert$ as a \emph{ring}, and express the states $\{e(t)\}$ with polar indices $(i,j)$, where the first index $i$ refers to the ring index, and the second index $j$ refers to each of the states within the ring, as shown in Fig.~\ref{fig:hexCell}. For $e(t)=(i,j)$, we have $\Vert e(t) \Vert=i$. If $u(t)=h(t)$ (i.e., the actual user and service locations (cells) are the same), then we have $e(t)=(0,0)$ and $\Vert e(t) \Vert=0$. 

Similarly as in the distance-based MDP, we assume in the 2-D MDP that we always migrate when $\Vert e(t) \Vert \geq N$, where $N$ is a design parameter, and we only consider the state space $\{e(t)\}$ with $\Vert e(t) \Vert \leq N$. 
The system operates in the intermediate state $e'(t)=u(t)-h'(t)=a(e(t))$ after taking action $a(e(t))$. The next state $e(t+1)$ is determined probabilistically according to the transition probability $P{[{e'(t),e(t+1)}]}$. We have $P{[{e'(t),e(t+1)}]}=1-6r$ when $e(t+1)=e'(t)$; $P{[{e'(t),e(t+1)}]}=r$ when $e(t+1)$ is a neighbor of $e'(t)$; and $P{[{e'(t),e(t+1)}]}=0$ otherwise.
Note that we always have $e(t)-e'(t)=h'(t)-h(t)$, so the one-timeslot cost is
$C_{a}(e(t))={b}(\Vert e(t)-e'(t) \Vert)+{c}(\Vert e'(t) \Vert)$.

We note that, even after simplification with the offset model, the 2-D offset-based MDP has a significantly larger number of states compared with the distance-based MDP, because for a distance-based model with $N$ states (excluding state zero), the 2-D offset model has $M=3N^{2}+3N$ states (excluding state $(0,0)$).
Therefore, we use the distance-based MDP proposed in Section \ref{section:1D_algorithm} to approximate the 2-D offset-based MDP, which significantly reduces the computational time as shown in Section \ref{numEval2D1DApprox}.

\subsection{Approximation by Distance-based MDP}

\label{sub:approxMethodDescription}

In the approximation, the parameters of the distance-based MDP are chosen as $p_{0}=6r$, $p=2.5r$, and $q=1.5r$. The intuition of the parameter choice is that, at state $(i'_0,j'_0)=(0,0)$ in the 2-D MDP, the aggregate probability of transitioning to any state in ring $i_1=1$ is $6r$, so we set $p_0=6r$; at any other state $(i'_0,j'_0)\neq(0,0)$, the aggregate probability of transitioning to any state in the higher ring $i_1=i'_0+1$ is either $2r$ or $3r$, and the aggregate probability of transitioning to any state in the lower ring $i_1=i'_0-1$ is either $r$ or $2r$, so we set $p$ and $q$ to the median value of these transition probabilities.

To find the optimal policy for the 2-D MDP, we first find the optimal policy for the distance-based MDP with the parameters defined above. Then, we map the optimal policy from the distance-based MDP to a policy for the \mbox{2-D} MDP. To explain this mapping, we note that, in the \mbox{2-D} hexagon offset model, there always exists at least one shortest path from any state $(i,j)$ to an arbitrary state in ring $i'$, the length of this shortest path is $|i-i'|$, and each ring between $i$ and $i'$ is traversed once on the shortest path. For example, one shortest path from state $(3,2)$ to ring $i'=1$ is $\{(3,2),(2,1),(1,0)\}$. When the system is in state $(i,j)$ and the optimal action from the distance-based MDP is $a^*(i)=i'$, we perform migration on the shortest path from $(i,j)$ to ring $i'$. If there exist multiple shortest paths, one path is arbitrarily chosen. For example, if $a(3)=2$ in the distance-based MDP, then we have either $a(3,2)=(2,1)$ or $a(3,2)=(2,2)$ in the 2-D MDP. With this mapping, the one-timeslot cost $C_a(d(t))$ for the distance-based MDP and the one-timeslot cost $C_a(e(t))$ for the 2-D MDP are the same, because the migration distances in the distance-based MDP and 2-D MDP are the same (thus same migration cost) and all states in the same ring $i'=\Vert e'(t) \Vert=d'(t)$ have the same transmission cost ${c}(\Vert e'(t) \Vert)={c}(d'(t))$.

\subsection{Bound on Approximation Error}
\label{sub:errorBound}

Error arises from the approximation because the transition probabilities in the distance-based MDP are not exactly the same as that in the 2-D MDP (there is at most a difference of $0.5r$). In this subsection, we study the difference in the discounted sum costs when using the  policy obtained from the distance-based MDP and the true optimal policy for the 2-D MDP. The result is summarized as Theorem \ref{prop:approxErrorBound}.

\begin{theorem}\label{prop:approxErrorBound} Let $V^*_{\textnormal{dist}}(e)$ denote the discounted sum cost when using the policy that is optimal for the distance-based MDP, and let $V^*(e)$ denote the discounted sum cost when using true optimal policy of the 2-D MDP, then we have $V^*_{\textnormal{dist}}(e)-V^*(e)\leq \frac{\gamma r \kappa}{1-\gamma}$ for all $e$,
where $\kappa\triangleq\max_{x}\left\{ {b}\left(x+2\right)-{b}\left(x\right)\right\} $.
\end{theorem}
\begin{proof}  
(Outline) The proof is completed in three steps. First, we modify the states of the 2-D MDP in such a way that the aggregate transition probability from any state $(i'_0,j'_0)\neq(0,0)$ to ring $i_1=i'_0+1$ (correspondingly, $i_1=i'_0-1$) is $2.5r$ (correspondingly, $1.5r$). We assume that we use a given policy on both the original and modified \mbox{2-D} MDPs, and show a bound on the difference in the discounted sum costs for these two MDPs. In the second step, we show that the modified \mbox{2-D} MDP is equivalent to the distance-based MDP. This can be intuitively explained by the reason that the modified 2-D MDP has the same transition probabilities as the distance-based MDP when only considering the ring index $i$, and also, the one-timeslot cost $C_a(e(t))$ only depends on $\Vert e(t)-a(e(t)) \Vert$ and $\Vert a(e(t)) \Vert$, both of which can be determined from the ring indices of $e(t)$ and $a(e(t))$. The third step uses the fact that the optimal policy for the distance-based MDP cannot bring higher discounted sum cost for the distance-based MDP (and hence the modified 2-D MDP) than any other policy. By utilizing the error bound found in the first step twice, we prove the result. For details of the proof, see Appendix~\ref{sec:proofOfApproxErrorBound}.
\end{proof}

The error bound is a constant value when all the related parameters are given.
It increases with $\gamma$. However, the absolute value of the discounted sum cost also increases with $\gamma$, so the relative error can remain low.

\subsection{Numerical Evaluation}
\label{numEval2D1DApprox}
The error bound derived in Section \ref{sub:errorBound} is a worst-case
upper bound of the error. In this subsection, we evaluate the performance
of the proposed approximation method numerically, and focus on the
average performance of the approximation.

\begin{figure*}
\center{\subfigure[]{\includegraphics[width=0.32\linewidth]{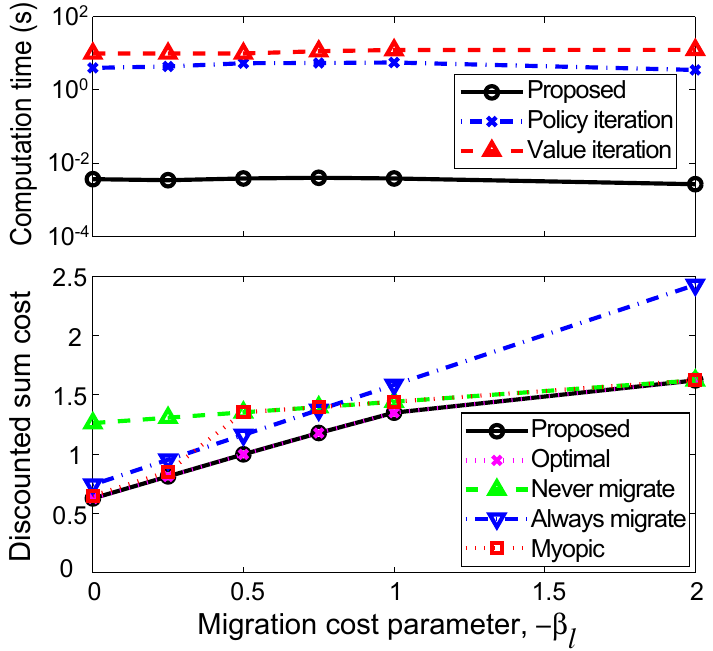}} 
\subfigure[]{\includegraphics[width=0.32\linewidth]{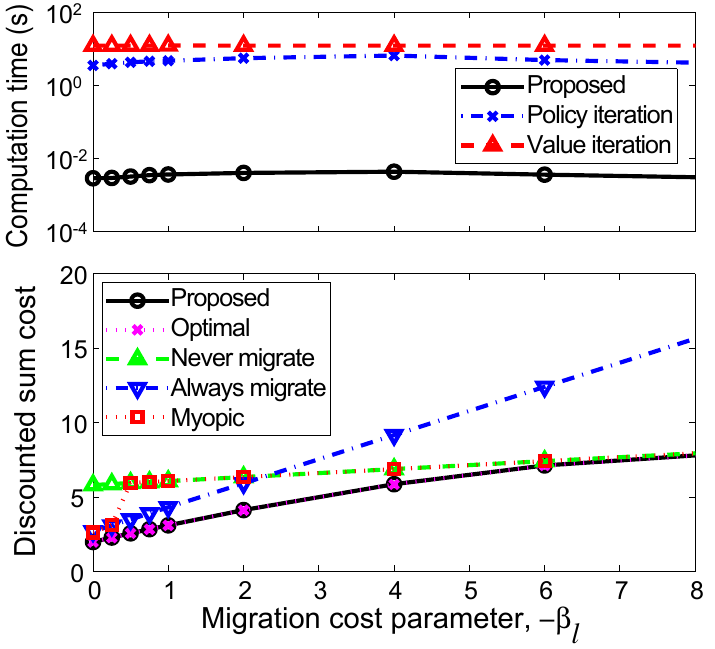}} 
\subfigure[]{\includegraphics[width=0.32\linewidth]{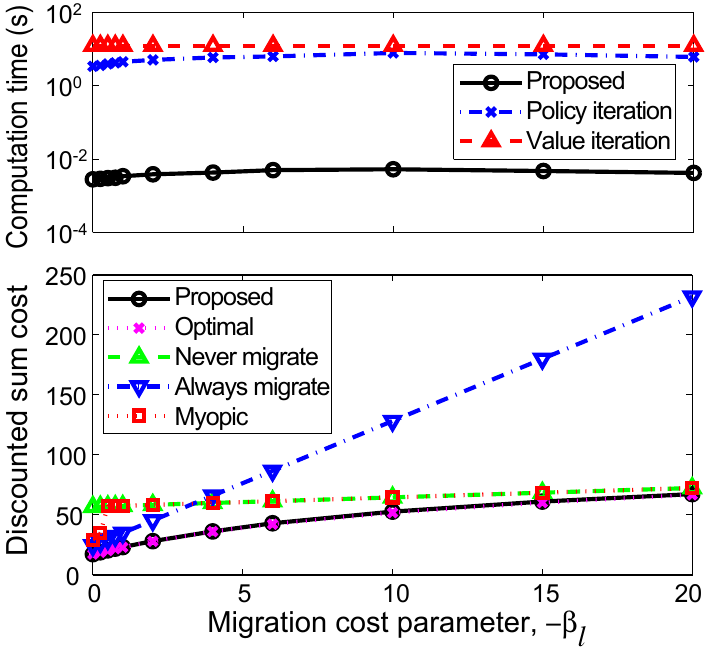}}}

\protect\caption{Simulation result with 2-D random walk: (a) $\gamma=0.5$, (b) $\gamma=0.9$,
(c) $\gamma=0.99$.}
\label{fig:simRandomWalk} 
\vspace{-0.1in}
\end{figure*}

We consider 2-D random walk mobility with randomly chosen parameter $r$. The maximum user-service distance is set as $N=10$. The transmission
cost function parameters are selected as $\theta=0.8$, $\delta_{c}=1$,
and $\delta_{l}=-1$. With these parameters, we have $\delta_{c}+\delta_{l}=0$, which means that there is no constant portion in the cost function.
For the migration cost, we choose $\mu=0.8$ and fix $\beta_{c}+\beta_{l}=1$ to represent a constant server processing cost for migration. The parameter $\beta_{l}\leq 0$
takes different values in the simulations, to represent different
sizes of data to be migrated.

The simulations are performed in MATLAB on a computer with Intel Core
i7-2600 CPU, 8GB memory, and 64-bit Windows 7. We study the computation
time (i.e., the time used to run the algorithm) and the discounted sum cost of the proposed approach that is based on
approximating the original 2-D MDP with the distance-based MDP. For the computation time comparison,
standard value and policy iteration approaches \cite[Chapter 6]{puterman2009markov}
are used to solve the original 2-D MDP. 
The discounted sum cost from the proposed approach is compared with the costs from alternative policies,
including the \emph{true optimal} policy from standard policy iteration on the 2-D model,
the \emph{never-migrate} policy which never migrates except when at states in ring $i\geq N$ (in which case the service is migrated to the current location of the user), the
\emph{always-migrate} policy which always migrates to the current user location when the user and service are at different locations, and the \emph{myopic} policy that chooses actions to minimize the one-slot cost. Note that the always-migrate policy is suggested in \cite{ha2015adaptive,ma2017efficient}, and a myopic policy that makes migration decisions based on instantaneous quality of service observations is proposed in~\cite{Saurez2016Migration}.

The simulations are run with $50$ different random seeds, and the overall
results are shown in Fig. \ref{fig:simRandomWalk} with different
values of the discount factor $\gamma$. 

\subsubsection{Reduction in Computation Time}  Fig. \ref{fig:simRandomWalk} shows that the computation time of the proposed method
is only about $0.1\%$ of that of standard value or
policy iteration. This time reduction is explained as follows. As discussed
in Section \ref{sub:OffsetBasedMDP}, for a distance-based MDP with
$N$ states (excluding state zero), the 2-D MDP has $M=3N^{2}+3N$
states (excluding state $(0,0)$). When we ignore the complexity
of matrix inversion in the policy iteration procedure, the standard
value and policy iteration approaches on the 2-D MDP have a complexity
of $O(M^{2})$ in each iteration, because the optimal action needs
to be found for each state, which requires enumerating through all
the states and all possible actions for each state (similarly as in
Lines \ref{algStartFindOptPolicy}--\ref{algEndFindOptPolicy} of Algorithm
\ref{alg:modifiedPolicyIteration}). In the simulations, $N=10$,
so we have $M=330$. Recall that the complexity of Algorithm~\ref{alg:modifiedPolicyIteration}
used in the proposed approach is $O(N^{2})$, so the ratio of the computational
complexities of different approaches can be approximated by $\frac{M^{2}}{N^{2}}\approx10^{3}$.
Therefore, the standard value and policy iteration  consume
about $10^{3}$ times more computation time compared to the proposed approach. 
Also note that the computation time of our proposed approach is only a few milliseconds (see Fig. \ref{fig:simRandomWalk}), whereas the MDP solution approach used in \cite{zhang2017VR} can easily take several seconds to minutes to obtain the solution for a similar sized system.

\subsubsection{Near-Optimal Cost} 
We can also see from Fig. \ref{fig:simRandomWalk} that the proposed method yields a discounted sum cost that is very close to the optimal cost. The results also provide several insights into the performance of the baseline policies. Specifically, the cost of the always-migrate policy approximates the optimal cost when $|\beta_{l}|$ is small, and the cost of the never-migrate policy approximates the optimal cost when $|\beta_{l}|$ is large. This is because,
when $|\beta_{l}|$ is small, the migration cost is relatively small,
and migration can be beneficial for most cases; when $|\beta_{l}|$
is large, the migration cost is large, and it is better not to migrate
in most cases. The myopic policy is the same as the never-migrate
policy when $|\beta_{l}|\geq0.5$, because ${b}(x)$ and ${c}(y)$ are both concave according to the simulation settings and we always have ${b}(x)\geq {c}(y)$
when $|\beta_{l}|\geq0.5$, where we recall that the myopic policy does not consider the future impact
of actions. There is an intersection of the costs from never-migrate
and always-migrate policies. When $|\beta_{l}|$ takes values that
are larger than the value at the intersection point, the optimal cost
is close to the cost from the never-migrate policy when $\gamma$
is small, and the gap becomes larger with a larger $\gamma$. The
reason is that the benefit of migration becomes more significant when we look
farther ahead into the future. 
We also note that the cost of never-migrate policy slightly increases as $|\beta_{l}|$ increases, because the never-migrate policy also occasionally migrates when the user-service distance greater than or equal to $N$ (see earlier definition).

\section{Application to Real-World Scenarios}
\label{section:RealWorldTraces}

In this section, we discuss how the aforementioned approaches can
be applied to service migration in the real world, where \emph{multiple users and services co-exist} in the cloud system, and the transition probability parameter $r$ is estimated based on an interval of recent observations before the current time.  
We note that in practical scenarios, ESs may \emph{not} be deployed at every BS, and each ES may have a \emph{capacity limit} that restricts the number of services it can host. Theoretically, it is still possible to formulate the service migration problem with these additional constraints as an MDP. However, the resulting MDP will have a significantly larger state space than our current model, and it is far more difficult to solve or approximate this new MDP. While we leave the theoretical analysis of this new MDP as future work, we propose a heuristic approach in this section to handle these additional constraints. The proposed approach is largely based on the results and intuitions obtained in previous sections. The 2-D MDP approximation approach proposed in Section~\ref{sub:approxMethodDescription} is used as a subroutine in the proposed scheme, and the distance-based MDP resulting from the approximation is solved using Algorithm~\ref{alg:modifiedPolicyIteration}.

\subsection{Mapping between Real-World and MDP-Model} 
\label{sub:mapRealToMDPTraces}
The mapping between the real-world and the MDP model is discussed as follows. 

\textbf{MEC Controller:} We assume that there exists a control entity which we refer to as the \emph{MEC controller}. The MEC controller does not need to be a separate cloud entity. Rather, it can be a service running at one of the ESs. 

\textbf{Base Stations:} Each BS (which may or may not have an ES attached to it) is assumed to have basic capability of keeping records on arriving and departing users, and performing simple monitoring and computational operations.

\textbf{Timeslots:} The physical time length corresponding to a slot in the MDP is a pre-specified parameter, which is a constant for ease of presentation.
This parameter can be regarded as a protocol parameter, and it is \emph{not} necessary for all BSs to precisely synchronize on individual timeslots. 

\textbf{Transition Probability:} 
The transition probability parameter $r$ is estimated from the sample paths of multiple users, using the procedure described in Section \ref{sub:overallProcedureTraces} below.
We define a window length $T_{w}\geq 1$ (represented as the number of timeslots), which specifies the amount
of timeslots to look back to estimate the parameter $r$. We consider the
case where $r$ is the same across the whole geographical
area, which is a reasonable assumption when different locations within
the geographical area under consideration have similarities (for example,
they all belong to an urban area). More sophisticated cases can be studied
in the future. 

\textbf{Distance:} 
The discrete distance in the MDP model can be measured using metrics related to the displacement of the user, such as the number of hops between different BSs or quantized Euclidean distance. The simulations in Section~\ref{sub:realWorldSimulation} show that both metrics are good for the proposed method to work well in practical scenarios.

\textbf{Cost Parameters:} The cost parameters
$\beta_{c}$, $\beta_{l}$, $\mu$, $\delta_{c}$, $\delta_{l}$,
and $\theta$ are selected based on the actual application scenario and MEC system characteristics,
and their values may vary with the background traffic load of the network and
ESs. 

\textbf{Discount Factor:} 
The discount factor $\gamma$ balances the trade-off between short-term and long-term costs. For example, if we set $\gamma = 0$, the algorithm minimizes the instantaneous cost only, without considering the future impact. This is essentially the same as the myopic policy. If we set $\gamma \approx 1.0$, then the algorithm aims at minimizing the long-term average cost. However, the instantaneous cost may be high in some timeslots in this case. The choice of $\gamma$ in practice depends on the acceptable level of fluctuation in the instantaneous cost, the importance of low average cost, and the time duration that the user accesses the service. In general, a larger $\gamma$ reduces the long-term average cost but the instantaneous cost in some slots may be higher. Also, if a user only requires the service for a short time, then there is no need to consider the cost for the long-term future. 
For ease of presentation, we set $\gamma$ as a constant value. In practice, the value of $\gamma$ can be different for different users or services.

\textbf{Policy Update Interval:} A policy update interval $T_{u}$ is defined (represented as the number of timeslots), at which a new migration policy is computed by the MEC controller.

\subsection{Overall Procedure} 
\label{sub:overallProcedureTraces}

The data collection, estimation, and service placement procedure is described below. 

\begin{enumerate}[font=\bfseries,align=left, leftmargin=0pt, labelindent=0pt,listparindent=0pt, labelwidth=!, itemindent=!]

\item At the beginning of each slot, the following is performed:

\begin{enumerate}[font=\bfseries,align=left, leftmargin=5pt, labelindent=5pt,listparindent=0pt, labelwidth=!, itemindent=!]

\item Each BS obtains the identities of its associated users. 
Based on this information, the BS computes the number of users that have left the cell (compared to the beginning of the previous timeslot) and the total number of users that are currently in the cell. This information is saved for each timeslot for the duration of $T_w$ and will be used in step \ref{traceOverallProc:computeEmpiricalProb}. 

\item The MEC controller collects information on currently active services on each ES, computes the new placement of services according to the procedure described in Section \ref{sec:actualServicePlacement}, and sends the resulting placement instructions to each ES. The placements of all services are updated based on these instructions.
\label{traceOverallProc:placementDecision}

\end{enumerate}

\item At every interval $T_{u}$, the following is performed:

\begin{enumerate}[font=\bfseries,align=left, leftmargin=5pt, labelindent=5pt,listparindent=0pt, labelwidth=!, itemindent=!]
\item The MEC controller sends a request to all BSs to collect the current statistics. 

\item After receiving the request, each BS $n$ computes the empirical probability of users moving outside of the cell: 
\begin{equation}
f_{n}=\frac{1}{T_w}\sum_{\tau=t-T_w}^{t-1}\frac{m'_{n}(\tau)}{m_{n}(\tau)}
\label{eq:realWorldEst1}
\end{equation}
where the total number of users that are associated to BS $n$ in slot $\tau$ is $m_{n}(\tau)$, among which $m'_{n}(\tau)$ users have disconnected from BS $n$ at the end of slot $\tau$ and these users are associated to a different BS in slot $\tau+1$; and $t$ denotes the current timeslot index.
These empirical probabilities are sent together with other monitored information, such as the current load of the network and ES (if the BS has an ES attached to it), to the MEC controller. 
\label{traceOverallProc:computeEmpiricalProb}

\item After the controller receives responses from all BSs, it performs the following:

\begin{enumerate}[font=\bfseries,align=left, leftmargin=5pt, labelindent=5pt,listparindent=0pt, labelwidth=!, itemindent=!]
\item Compute the transmission and migration cost parameters $\beta_{c}$, $\beta_{l}$, $\mu$, $\delta_{c}$, $\delta_{l}$,
and $\theta$ based on the measurements at BSs. 

\item \label{traceOverallProc:estEmpiricalProb} Compute the average of empirical probabilities $f_{n}$ by 
\begin{align}
\bar{f}=\frac{1}{N_{\textrm{BS}}} \sum_{n\in\mathcal{N}_\textrm{BS}} f_{n}
\label{eq:realWorldEst2}
\end{align}
where $\mathcal{N}_\textrm{BS}$ is the set of BSs and $N_{\textrm{BS}}=|\mathcal{N}_\textrm{BS}|$ is the total number of BSs.
Then, estimate the parameter $r$ by 
\begin{align} 
\hat{r}=\bar{f}/6.
\label{eq:realWorldEst3}
\end{align} 

\item In the distance-based MDP, set $p_{0}=6\hat{r}$, $p=2.5\hat{r}$, and $q=1.5\hat{r}$ (as discussed in Section~\ref{sub:approxMethodDescription}), compute and save the optimal distance-based policy from Algorithm~\ref{alg:modifiedPolicyIteration}. 
Also save the estimated cost parameters and the optimal discounted sum costs $V^*(d)$ for all $d$  for later use.
 \label{traceOverallProc:estProcedureFindPolicyStep}

\end{enumerate}

\end{enumerate}

\end{enumerate}

\emph{Remark: } 
In the procedure presented above, we have assumed that $m_n (\tau) \neq 0$ for all $n$ and $\tau$. This is only for ease of presentation. When there exist some $n$ and $\tau$ such that $m_n (\tau) = 0$, we can simply ignore those terms (set the corresponding terms to zero) in the sums of (\ref{eq:realWorldEst1}) and (\ref{eq:realWorldEst2}), and set the values of $T_w$ and $N_\textrm{BS}$ to the actual number of terms that are summed up in (\ref{eq:realWorldEst1}) and (\ref{eq:realWorldEst2}), respectively. 

In essence, the above procedure recomputes $\hat{r}$ and other model parameters at an interval of $T_u$ timeslots, using measurements obtained in the previous $T_w$ slots. This allows the MDP model and the algorithm to adapt to the most recent characteristics of the system, which may dynamically change over time due to network and user dynamics.

\subsection{Service Placement Update}
\label{sec:actualServicePlacement}

At the start of every timeslot, service placement is updated and migration is performed when needed (step~\ref{traceOverallProc:placementDecision} in Section~\ref{sub:overallProcedureTraces}).

The policy found from the MDP model specifies which cell to migrate to when the system is in a particular state $(u(t), h(t))$. However, we may not be able to apply the  policy directly, because not every BS has an ES attached to it and each ES has a capacity limit. We may need to make some modifications to the service placement specified by the policy, so that the practical constraints are not violated. The MDP model also does not specify where to place the service if it was not present in the system before. In the following, we present a method to determine the service placement with these practical considerations, which is guided by the optimal policy obtained from the MDP model and at the same time satisfies the practical constraints.

The algorithm first ignores the ES capacity limit and repeat the process in steps \ref{subsub:initialPlacementTraces} and \ref{subsub:serviceMigTraces} below for every service. Then, it incorporates the capacity limit, and reassign the locations of some services (in step \ref{subsub:serviceRelocExceedCapTraces}) that were previously assigned to an ES whose capacity is exceeded. 

\begin{enumerate}[(I), font=\bfseries, align=left, leftmargin=0pt, labelindent=10pt,listparindent=0pt, labelwidth=!, itemindent=!]

\item \textbf{Initial Service Placement:} 
\label{subsub:initialPlacementTraces}
When the service was not running in the system before (i.e., it is being initialized), the service is placed onto an
ES that has the smallest distance to the user. The intuition for this rule is that the initialization cost and the cost of
further operation is usually small with such a placement. 

\item \textbf{Dynamic Service Migration:} 
\label{subsub:serviceMigTraces}
When the service has been initialized earlier and is currently running in the system, a decision on whether and where to migrate the service is made.
Without loss of generality, assume that the current timeslot is $t=0$.
We would like to find an action $a$ that is the solution to the following problem:
\begin{align}
\min_a \,\, & C_{a}(d) + \gamma \!\!\!\!\sum_{d(1) ={a}(d)-1}^{{a}(d)+1} \!\!\!\! P{[{{a}(d),d(1)}]}\cdot V^*(d(1)) 
\label{eq:balanceEquInRealWorldForDecision} \\
\parbox{1.2em}{s.t. \\ }\,\, &\parbox{20em}{there exists an ES such that the user-service distance is  $a(d)$ after migration} \nonumber
\end{align}
where $V^*(d)$ stands for the optimal discounted sum cost found  from step \ref{traceOverallProc:estProcedureFindPolicyStep} in Section \ref{sub:overallProcedureTraces}.
We note that (\ref{eq:balanceEquInRealWorldForDecision}) is a one-step value iteration following the balance equation  (\ref{eq:balanceDiscountedSumCost}).
Intuitively, it means that assuming the optimal actions are taken in future slots, find the action for the current slot that incurs the lowest discounted sum cost (including both immediate and future cost). When all BSs have ESs attached to them, the solution $a$ to problem (\ref{eq:balanceEquInRealWorldForDecision}) is the same as the optimal action of the MDP-model found from step \ref{traceOverallProc:estProcedureFindPolicyStep} in Section \ref{sub:overallProcedureTraces}. However, the optimal $a$ from (\ref{eq:balanceEquInRealWorldForDecision}) may be different from the optimal action from the MDP model when some BSs do not have ESs attached.
The resulting distance-based migration action can be mapped to a migration action on 2-D space using the procedure in Section~\ref{sub:approxMethodDescription}.

\item \textbf{Reassign Service Location if ES's Capacity Exceeded:} 
\label{subsub:serviceRelocExceedCapTraces}
The placement decisions in steps \ref{subsub:initialPlacementTraces} and \ref{subsub:serviceMigTraces} above do not consider the capacity limit of each ES, so it is possible that we find the ES capacity is exceeded after following the steps in the above sections. When this happens, we start with an arbitrary ES (denoted by $i_0$) whose capacity constraint is violated. We rank all services in this ES according to the objective function in (\ref{eq:balanceEquInRealWorldForDecision}), and start to reassign the service location with the highest objective function value. This service is placed on an ES that still has capacity for hosting it, where the placement decision is also made according to (\ref{eq:balanceEquInRealWorldForDecision}) but only the subset of ESs that are still capable of hosting this service are considered. 
The above process is repeated until the number of services hosted at ES $i_0$ becomes within its capacity limit. Then, this whole process is repeated for other ESs with exceeded capacity.

\end{enumerate}

\emph{Remark}: We note that service relocation does not really occur in the system. It is only an intermediate step in the algorithm for finding new service locations.
We use this two-step approach involving temporary placement and relocation instead of an alternative one-step approach that checks for ES capacity when performing the placement/migration in steps \ref{subsub:initialPlacementTraces} and \ref{subsub:serviceMigTraces}, because with such a two-step approach, we can leave the low-cost services within the ES and move high-cost services to an alternative location.

\subsection{Performance Analysis}

\subsubsection{Estimation of Parameter $r$} 
\label{section:EstDiscussion}

As introduced in Section \ref{section:2D1DApprox}, at every timeslot, each user randomly moves to one of its neighboring cells with probability $r$ and stays in the same cell with probability $1-6r$. In the real world, the parameter $r$ is unknown a priori and needs to be estimated based on observations of user movement. Equations (\ref{eq:realWorldEst1})--(\ref{eq:realWorldEst3}) in Section~\ref{sub:overallProcedureTraces} serve for this estimation purpose, and the resulting $\hat{r}$ is an estimator of $r$. In the following, we analyze some statistical properties of $\hat{r}$ and discuss the rationale for using such an estimation approach\footnote{In the analysis, we still assume that $m_n (\tau) \neq 0$ for all $n$ and $\tau$ as in Section~\ref{sub:overallProcedureTraces}. For cases with $m_n (\tau) = 0$, we can make a similar substitution as described in the remark in Section~\ref{sub:overallProcedureTraces}.}.

We note that the mobility model presented in Section~\ref{section:2D1DApprox} is for an infinite 2-D space with an infinite number of cells. In reality, the number of cells is finite. We assume in our analysis that each user stays in the same cell with probability $1-6r$ ($r\leq \frac{1}{6}$) and moves out of its current cell with probability $6r$, no matter whether the cell is at the boundary (such as cells in the outer ring $i=3$ in Fig.~\ref{fig:hexCell}) or not. 
When a cell is at the boundary, its transition probability to each of its neighboring cells is larger than $r$, because it has less than six neighbors. For example, in  Fig.~\ref{fig:hexCell}, a user in cell $(3,0)$ moves to \emph{each} of its neighboring cells (including $(3,17), (2,0), (3,1)$) with probability $2r$, and the total probability of moving out of the cell is still $6r$.
We also assume that the mobility patterns of different users are independent of each other.

\begin{theorem}
\label{prop:rEstUnbiased}
Assume that each user follows 2-D random walk (defined above) with parameter $r$, then $\hat{r}$ is an unbiased estimator of $r$, i.e., $\expect{\hat{r}}=r$.
\end{theorem}
\begin{proof}
See Appendix \ref{sec:proofOfREstUnbiased}.
\end{proof}

The fact that $\hat{r}$ is an unbiased estimator of $r$ justifies our estimation approach, which intuitively means that the long-term average of the estimated value $\hat{r}$ should not be too far away from the true value of $r$. 

We analyze the variance of the estimator next. Such analysis is not very easy due to the dependency among different random variables.
To make the analysis theoretically tractable, we introduce the following assumption.
\begin{assumption}
\label{assumption:independence}
We assume that $m'_n(\tau)$ is independent of $m_n(\tilde{\tau})$, $m_{\tilde{n}}(\tau)$, $m_{\tilde{n}}(\tilde{\tau})$, $m'_n(\tilde{\tau})$, $m'_{\tilde{n}}(\tau)$, and $m'_{\tilde{n}}(\tilde{\tau})$ (where $\tilde{n} \neq n$ and $\tilde{\tau} \neq \tau$) when $m_n(\tau)$ is given.
\end{assumption}
Essentially, this assumption says that $m'_n(\tau)$ is only dependent on $m_n(\tau)$. In practice when the number of users is large, this assumption is close to reality, because the locations of different users are independent of each other, and also because of the Markovian property which says that future locations of a user only depends on its present location (and independent of all past locations when the present location is given).

\begin{theorem}
\label{prop:rEstVarianceUpperBound}
Assume that Assumption \ref{assumption:independence} is satisfied and each user follows {2-D} random walk (defined at the beginning of Section \ref{section:EstDiscussion}) with parameter $r$.
The variance of estimator $\hat{r}$ is upper bounded by $\mathrm{Var} \{ \hat{r} \} \leq \frac{1}{144 N_{\textrm{BS}} T_w}$.
\end{theorem}
\begin{proof}
See Appendix \ref{sec:proofOfREstVarianceUpperBound}.
\end{proof}

We see from Theorem \ref{prop:rEstVarianceUpperBound} that the upper bound of variance is inversely proportional to $T_w$. This is an intuitive and also very favorable property, which says that when users follow an ideal random walk mobility model with parameter $r$, we can estimate the value of $r$ as accurate as possible if we have a sufficient amount of samples. 

Different from many estimation problems where it is costly (requiring  human participation, data communication, etc.) to obtain samples, it is not too difficult to collect samples in our case, because each BS can save user records at every timeslot, and we can easily adjust the number of samples (proportional to $T_w$) by changing the amount of timeslots to search back in the record. Therefore, unlike many other estimation problems where the goal is to minimize the variance of estimator under a given sample size, we do not target this goal here. A much more important issue in our problem is that the ideal random walk mobility model may not hold in practice. The parameter estimation procedure in Section \ref{sub:overallProcedureTraces} considers possible model violations, and its rationale is explained below.

In the estimation procedure, we first focus on a particular cell $n$ and compute the empirical probability of users moving out of that cell in (\ref{eq:realWorldEst1}), by treating each timeslot with equal weight. Then, we take the average of such empirical probabilities of all cells in (\ref{eq:realWorldEst2}). We note that the number of users in different cells and timeslots may be imbalanced. Thus, in (\ref{eq:realWorldEst1}), the empirical probabilities for different cells and slots may be computed with different number of users (samples), i.e., different values of $m_n(\tau)$. 

Suppose we use an alternative approach that first computes the total number of users in all cells and all slots (i.e., $\sum_{n\in\mathcal{N}_\textrm{BS}}  \sum_{ \tau=t-T_w}^{t-1} m_{n}(\tau)$) and the aggregated total number of users leaving their current cells at the end of slots (i.e., $\sum_{n\in\mathcal{N}_\textrm{BS}} \sum_{ \tau=t-T_w}^{t-1} m'_{n}(\tau)$). Then, this approach estimates $r$ by the overall empirical probability of users moving out of their current cells (i.e., $ \frac{\sum_{n\in\mathcal{N}_\textrm{BS}} \sum_{ \tau=t-T_w}^{t-1} m'_{n}(\tau)}{6 \cdot \sum_{n\in\mathcal{N}_\textrm{BS}} \sum_{ \tau=t-T_w}^{t-1} m_{n}(\tau)} $).
Intuitively, this alternative estimator may bring a lower variance compared to our estimator $\hat{r}$, because it treats all the samples as a whole. 

However, the uniform random walk mobility model \emph{may not hold} precisely in practice; even if it holds, the parameter $r$ \emph{may be time-varying or different in different geographical areas}. Thus, we compute the empirical probability for each cell in each slot first, so that different cells and slots are equally weighted in the estimation, although some cells may have more users than others in certain slots. This is for the consideration of fairness among different cells, so that the performance at different cells remains similar. It is also to avoid statistics at a single slot dominating the overall result. Since $r$ may be time-varying in practice, it is possible that single-slot statistics do not represent the overall case.

\subsubsection{Cost Difference Due to Additional Constraints}
Due to the constraints on ES existence and capacity, the procedure in Section~\ref{sec:actualServicePlacement} may have to choose an action $a(d)$ that is different from the optimal action $a^*(d)$ of the distance-based MDP. The following theorem captures the effect of such non-optimal choices, from the MDP's point of view.

\begin{theorem}\label{prop:approxErrorBoundAddConstraints} Assume that the actual action $a(d) = \left[a^*(d) + k\right]_0^{N-1}$ with probability $\alpha_k$, and there exists $K\geq 0$ such that $\alpha_k = 0$ for all $|k|>K$. Let $\tilde{V}(d)$ and $V^*(d)$ denote the discounted sum costs of the distance-based MDP, when following the actual and optimal actions, respectively. We have  $\tilde{V}(d)-V^*(d)\leq \frac{\gamma \kappa'}{1-\gamma}$ for all $d$, where $\kappa' \triangleq  \max_x \left\{b\left(x + K+2\right) - b(x) \right\}$ and $[x]_v^w$ denotes $\min\{\max\{x,v\},w\}$.
\end{theorem}
\begin{proof}
See Appendix~\ref{sec:ProofOfApproxErrorBoundAddConstraints}.
\end{proof}

In the above theorem, we assume that the actual actions are random for the ease of analysis. 
The probabilities $\alpha_k$ (and $K$) depend on the additional constraints on ES existence/capacity and the load of the system. Intuitively, $K$ becomes large when the system is heavily loaded or when only a few of the BSs have ESs attached to them, because the system may need to choose an action that is far away from the optimal action in this case. Since the migration cost $b(x)$ is non-decreasing with $x$, a larger $K$ can only give a worse (or equal, but not better) error bound, as one would intuitively expect.

\subsection{Trace-Driven Simulation}

\label{sub:realWorldSimulation}

\begin{figure}
\center{
\subfigure[]{\includegraphics[width=0.4\columnwidth]{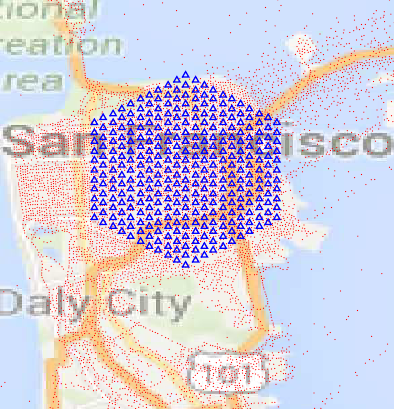}}
\quad
\subfigure[]{\includegraphics[width=0.4\columnwidth]{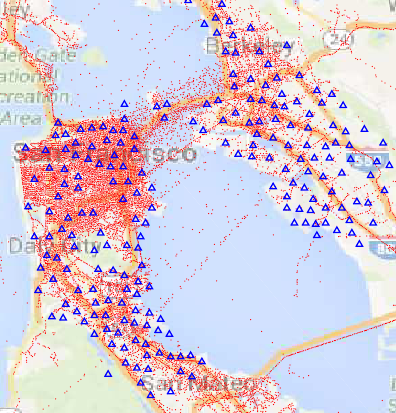}}
}

\protect\caption{BS and taxi locations: (a) Hexagonal BS placement, (b) Real BS placement. The blue triangles indicate the BS location and the red dots indicate possible taxi locations. There appears to be a small amount of erroneous/inaccurate data for taxi locations but the majority of them are correct. Map data: Google.}
\label{fig:simCellsMap} 
\end{figure}

\begin{figure*}
\center{

\begin{tabular}{ c c c c}
 \subfigure{\small \raisebox{2cm}{$\mathcal{H}$+$\mathcal{N}$:}} & 
 \subfigure{\includegraphics[width=0.3\linewidth]{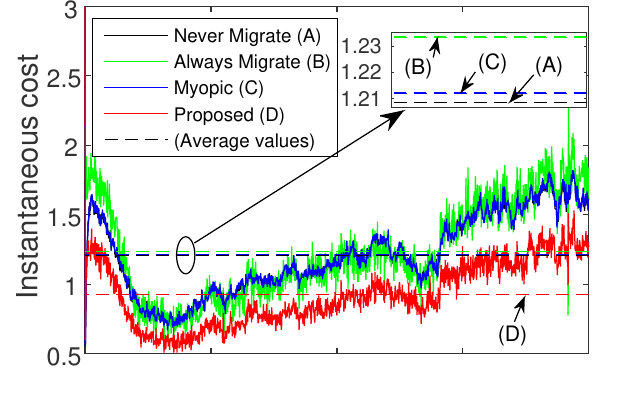}} & 
\subfigure{\includegraphics[width=0.29\linewidth]{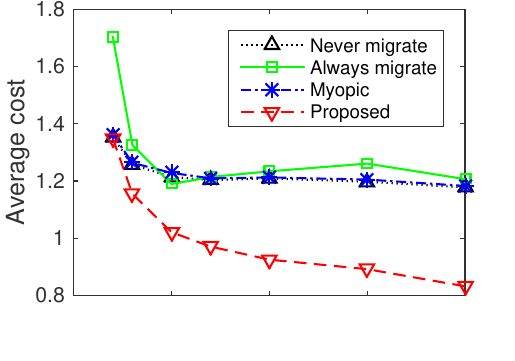}} &
\subfigure{\includegraphics[width=0.29\linewidth]{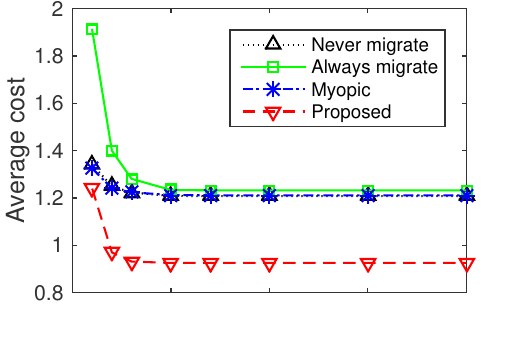}} 
\vspace{-0.2in} \\ 

\subfigure{\small  \raisebox{2cm}{$\mathcal{H}$+$\mathcal{C}$:}} &
\subfigure{\includegraphics[width=0.3\linewidth]{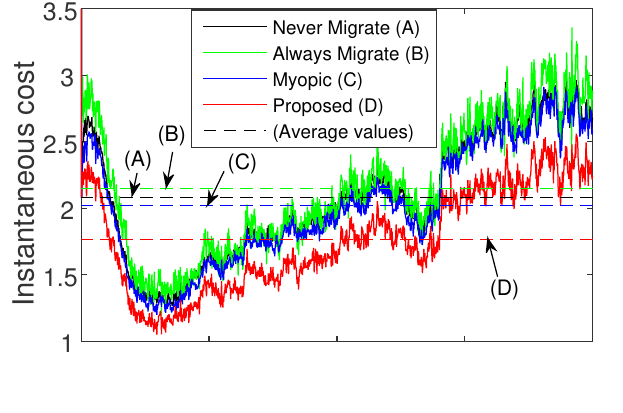}} &
\subfigure{\includegraphics[width=0.29\linewidth]{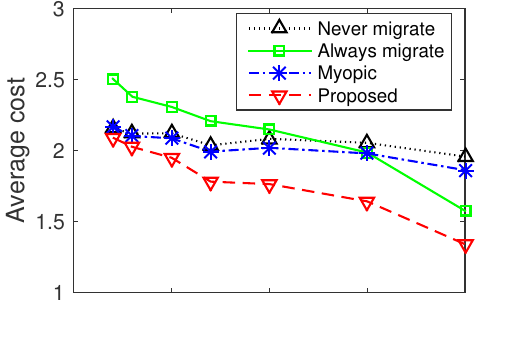}} &
\subfigure{\includegraphics[width=0.29\linewidth]{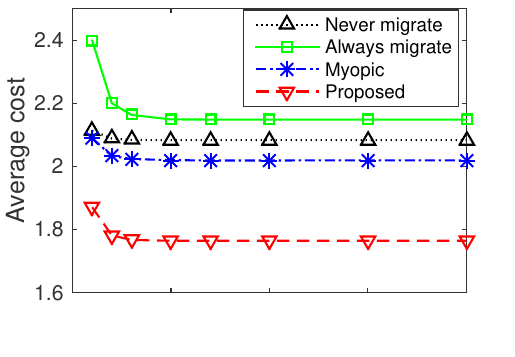}}
\vspace{-0.2in} \\ 

\subfigure{\small  \raisebox{2cm}{$\mathcal{R}$+$\mathcal{N}$:} } &
\subfigure{\includegraphics[width=0.3\linewidth]{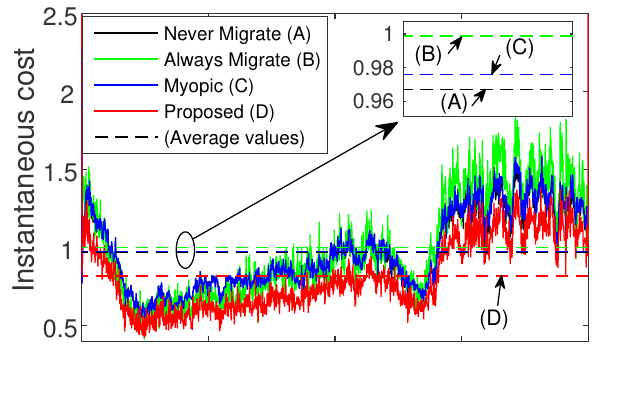}} &
\subfigure{\includegraphics[width=0.29\linewidth]{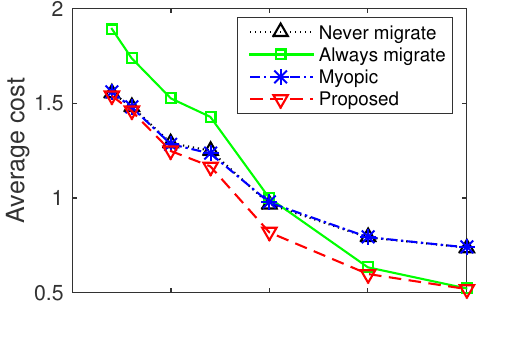}} &
\subfigure{\includegraphics[width=0.29\linewidth]{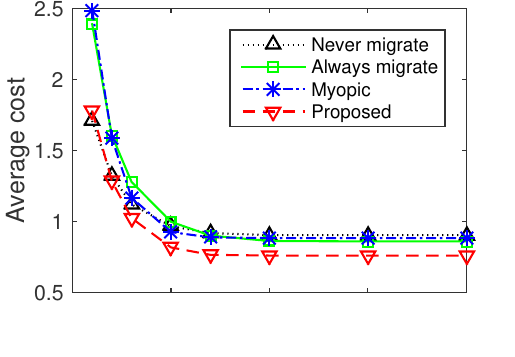}}
\vspace{-0.2in} \\ 

\subfigure{\small  \raisebox{2cm}{$\mathcal{R}$+$\mathcal{C}$:} } &
\subfigure{\includegraphics[width=0.3\linewidth]{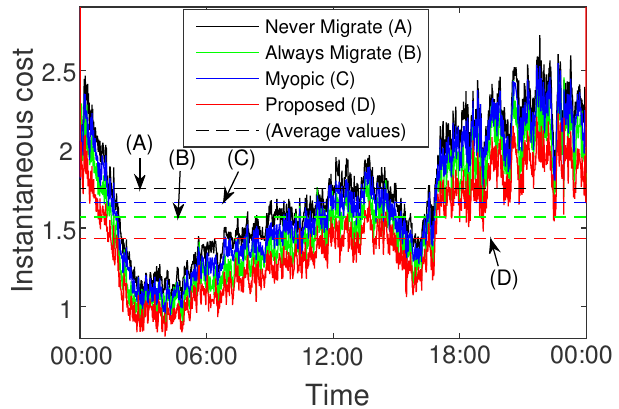}} &
\subfigure{\includegraphics[width=0.29\linewidth]{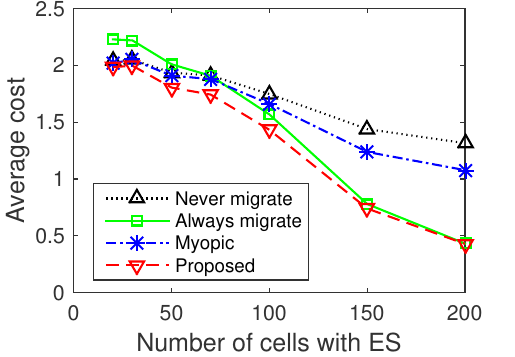}} &
\subfigure{\includegraphics[width=0.29\linewidth]{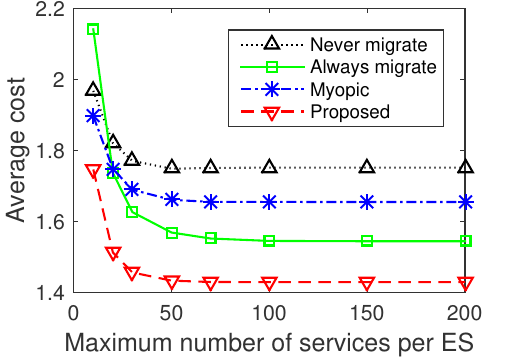}}
\vspace{-0.05in}
\end{tabular}
}
\protect\caption{Instantaneous and average costs over a day in trace-driven simulation, where $R_t=R_p=1.5$, $\mathcal{H}$ and $\mathcal{R}$ respectively stand for $\mathcal{H}$exagonal and $\mathcal{R}$eal BS placement, $\mathcal{N}$ and $\mathcal{C}$ respectively stand for $\mathcal{N}$on-constant and $\mathcal{C}$onstant costs. In some plots, an enlarged plot of the circled area is shown. The arrows annotated with (A), (B), (C), and (D) point to the average values over the whole day of the corresponding policy.}
\label{fig:simInstant} 
\vspace{-0.1in}
\end{figure*}

We perform simulation with real-world mobility traces of $536$ taxis in San Francisco, collected on  May 31, 2008 \cite{comsnets09piorkowskiMobilityTraces,epfl-mobility-2009-02-24}, where different number of taxis are operating (i.e., active) at different time of the day. Each active taxi is modeled as a user that requests a service that is independent of the services of the other taxis.  Two different ways of BS placement are considered, as shown in Fig.~\ref{fig:simCellsMap}. The first is a \emph{hexagonal} BS placement with $500$~m cell separation and $N_\textrm{BS}=331$ cells (BSs) in total in the central area of San Francisco. The second placement uses \emph{real} cell tower locations obtained from \url{www.antennasearch.com} on a larger geographical area. To limit the total number of BSs in the simulation, the cell tower locations in the original dataset are down-selected so that the distance between two BSs is at least $1,000$~m.
In both cases, each taxi connects to its nearest BS measured by Euclidean distance. 
This modeling of user mobility and BS locations is similar to other recent work including \cite{ZhangCaching2015,Saurez2016Migration,WangICDCS2017,zhang2017VR}.
For the hexagonal BS placement, the distance of the MDP model is computed as the length of the shortest path between two different BSs. For the real BS placement, the distance is computed as the Euclidean distance between BSs divided by $1,000$~m and rounded up to the next integer.

We set the physical time length for each timeslot as $60$~s. The parameters for data collection and policy update are set as $T_u=1$~slot, and $T_w=60$~slots. We choose $N=10$ and $\gamma=0.9$. 
From Theorem \ref{prop:rEstVarianceUpperBound}, we know that for the hexagonal BS placement, the standard deviation of estimator $\hat{r}$ (for an ideal random walk mobility model) is upper-bounded by $\sqrt{\frac{1}{144 N_\textrm{BS} T_w}}=0.00059$, which is reasonably small.
Unless otherwise specified, there are $100$ BSs that have ESs attached to them and each ES can host at most $50$ services. The BSs with ESs are evenly distributed among all BSs. Note that the locations of taxis in the dataset are unevenly distributed and the density of taxis can be very high in some areas at certain times of the day.

Similar to Section \ref{numEval2D1DApprox}, we compare the performance of the proposed method with baseline policies including always/never-migrate and myopic. To cope with the nature of the policies, the objective function in (\ref{eq:balanceEquInRealWorldForDecision}) is modified accordingly for these baseline policies. The objective in (\ref{eq:balanceEquInRealWorldForDecision}) is defined as the user-service distance for the always- or never-migrate policies (recall that migration also happens with never-migrate policy when the user-service distance is $N$ or larger), and it is defined as the one-timeslot cost for the myopic policy.

\textbf{Cost definition:} The cost is defined as related to both the distance and the system load, where the latter may introduce additional delay due to queueing. We assume that the system load is proportional to the number of taxis in operation, and define parameters $R_t, R_p > 1$ representing the availability of two types of resources. Then, we define  $G_t \triangleq 1\Big/\left(1-\frac{m_{\textrm{cur}}}{R_t m_{\textrm{max}}}\right)$ and $G_p \triangleq 1\Big/\left(1-\frac{m_{\textrm{cur}}}{R_p m_{\textrm{max}}}\right)$,
where $m_{\textrm{cur}}$ denotes the number of taxis in operation at the time when the optimal policy is being computed, and $m_{\textrm{max}}$ denotes the maximum number of taxis that may simultaneously operate at any time instant in the considered dataset. 
The above expressions for $G_t$ and $G_p$ have the same form as the average queueing delay expression \cite{refQueuingBook,LiLoadBalancing}.

We set $\mu=\theta=0.8$ and consider two different cost definitions. The first is \emph{non-constant cost} with parameters defined as $\beta_c=G_p+G_t$, $\beta_l=-G_t$, $\delta_c=G_t$, and $\delta_l=-G_t$. 
With such a definition, according to (\ref{eq:costMigration}) and (\ref{eq:costTransmission}), we have that whenever the distance $x,y>0$, the migration cost ${b}(x) >\beta_c+\beta_l = G_p$ representing a non-zero processing cost whenever the distance is not zero, and the transmission cost ${c}(y) > \delta_c+\delta_l = 0$. The costs ${b}(x)$ and ${c}(y)$ increase with $x$ and $y$ respectively, at a rate related to $G_t$, to represent the cost due to network communication (incurred in both migration and user-ES data transmission) which is related to the distance. 
The second type of cost we consider is \emph{constant cost}, where we set $\beta_c=G_p$, $\delta_c=G_t$, and $\beta_l = \delta_l = 0$. With this definition, the migration cost is $G_p$ and the transmission cost is $G_t$ whenever the distance is larger than zero.

\textbf{Results:} The results are shown in Fig.~\ref{fig:simInstant}, which includes the instantaneous costs over the day (i.e., the $C_a(s(t))$ values, averaged over all users that are active in that slot), and the average costs over the day with different number of cells with ES and different capacities per ES. We see that proposed approach gives lower costs than other approaches in almost all cases.
The fluctuation in the instantaneous cost is due to different system load (number of active taxis) over the day.
This also implies that it is necessary to compute the optimal policy in real-time, based on recent observations of the system condition.
The average cost of the proposed approach becomes closer to that of never-migrate and myopic policies when the number of ES or the capacity per ES is small, because it is hardly possible to migrate the service to a better location due to the constrained resource availability in this case.
Additional simulation results are presented in Appendix~\ref{sec:AdditionalSimulationResults}, showing that the proposed approach outperforms the other approaches also with different values of $R_t, R_p$.

\section{Discussions}
\label{section:Discussions}

Some assumptions have been made in the paper to make the problem theoretically tractable and also for the ease of presentation. In this section, we justify these assumptions from a practical point of view and discuss possible extensions.

\textbf{Cost Functions:} To ease our discussion, we have limited our attention to transmission and migration costs in this paper. This can be extended to include more sophisticated cost models. For example, the transmission cost can be extended to include the computational cost of hosting the service at an ES, by adding a constant value to the transmission cost expression.
As in Section~\ref{sub:realWorldSimulation}, the cost values can also be time-varying and related to the background system load. 
Furthermore, the cost definition can be practically regarded as the average cost over multiple locations, which means that when seen from a single location, the monotonicity of cost values with distances does not need to apply. This makes the proposed approach less restrictive in terms of practical applicability.

We also note that it is generally possible to formulate an MDP with  additional dimensions in cost modeling, such as one that includes the state of the network, load at each specific ES, state of the service to avoid service interruption when in critical state, etc. However, this requires a significantly larger state space compared to our formulation in this paper, as we need to include those network/ES/service states in the state space of the MDP. There is a tradeoff between the complexity of solving the problem and accuracy of cost modeling. Such issues can be studied in the future, where we envision similar approximation techniques as in this paper can be used to approximately solve the resulting MDP.

\textbf{Single/Multiple Users:} As pointed out in Section \ref{section:problem_formulation}, although we have focused on a single user in our problem modeling, practical cases involving multiple users running independent services can be considered by setting cost functions related to the background traffic generated by other users, as in Section \ref{section:RealWorldTraces}. For more complicated cases such as multiple users sharing the same service, or where the placement of different services is strongly coupled and reflected in the cost value, we can formulate the problem as an MDP with larger state space (similarly to the generalized cost model discussed above).
 
\textbf{Uniform Random Walk:}
The uniform random walk mobility model is used as a modeling assumption, which not only simplifies the theoretical analysis, but also makes the practical implementation of the proposed method fairly simple in the sense that only the empirical probability of users moving outside of the cell needs to be recorded (see Section \ref{sub:overallProcedureTraces}). This model can capture the average mobility of a large number of users. The simulation results in Section \ref{sub:realWorldSimulation} confirm that this model provides good performance, even though individual users do not necessarily follow a uniform random walk.

\textbf{Centralized/Distributed Control:}
We have focused on a centralized control mechanism in this paper for the ease of presentation. However, many parts of the proposed approach can be performed in a distributed manner. For example, in step \ref{traceOverallProc:placementDecision} in Section \ref{sub:overallProcedureTraces}, the service placement decision can be made among a smaller group of ESs if the controller sends the results from step \ref{traceOverallProc:estProcedureFindPolicyStep} to these ESs. In particular, the temporary service placement decision in steps \ref{subsub:initialPlacementTraces} and \ref{subsub:serviceMigTraces} in Section~\ref{sec:actualServicePlacement} can be made locally on each ES (provided that it knows the locations of other ESs). The capacity violation check in step \ref{subsub:serviceRelocExceedCapTraces} in Section~\ref{sec:actualServicePlacement} can also be performed locally. If some ESs are in excess of capacity, service relocation can be performed using a few control message exchange between ESs, where the number of necessary messages is proportional to the number of services to relocate. Relocation would rarely occur if the system load is not very high. The computation of average empirical probability in step \ref{traceOverallProc:estEmpiricalProb} in Section \ref{sub:overallProcedureTraces} can also be distributed in the sense that a subset of ESs compute local averages, which are subsequently sent to the MEC controller that computes the global average.

\textbf{ES-BS Co-location:}
For ease of presentation, we have assumed that ESs are co-located with BSs. However, our proposed approach is not restricted to such cases and can easily incorporate scenarios where ESs are not co-located with BSs as long as the costs are geographically dependent.

\section{Conclusions}
\label{section:Conclusions}

In this paper, we have studied service migration in MEC. The problem is formulated as an MDP, but its state space can be arbitrarily large. To make the problem tractable, we have reduced the general problem into an MDP that only considers a meaningful parameter, namely the distance between the user and service locations. The  distance-based MDP has several structural properties that allow us to develop an efficient algorithm to find its optimal policy. We have then shown that the distance-based MDP is a good approximation to scenarios where the users move in a \mbox{2-D} space, which is confirmed by analytical and numerical evaluations. After that, we have presented a method of applying the theoretical results to a practical setting, which is evaluated by simulations with real-world data traces of taxis and cell towers in San Francisco.

The results in this paper provide an efficient solution to service migration in MEC. Further, we envision that our theoretical approach can be extended to a range of other MDP problems that share similar properties. The highlights of our approaches include: a closed-form solution to the discounted sum cost of a particular class MDPs, which can be used to simplify the procedure of finding the optimal policy; a method to approximate an MDP (in a particular class) with one that has smaller state space, where the approximation error can be shown analytically; and a method to collect statistics from the real-world to serve as parameters of the MDP.

\section*{Acknowledgment}
Contribution of S. Wang is partly related to his previous affiliation with Imperial College London. Contributions of R.~Urgaonkar, M. Zafer, and T. He are related to their previous affiliation with IBM Research.

\bibliographystyle{IEEEtran}
\bibliography{IEEEabrv,main}

\clearpage 

\appendices

\setcounter{theorem}{0}
\setcounter{corollary}{0}
\setcounter{equation}{0}
\setcounter{figure}{0}
\renewcommand{\thetheorem}{A.\arabic{theorem}}
\renewcommand{\thecorollary}{A.\arabic{corollary}}
\renewcommand{\theequation}{A.\arabic{equation}}
\renewcommand{\thefigure}{A.\arabic{figure}}

\section{Proof of Theorem \ref{theorem:notMigrateToFurther}}
\label{sec:proofOfTheoremNotMigrateFurther}

Suppose that we are given a service migration policy $\pi$ such that the service can be migrated to a location that is farther away from the user, i.e., $\Vert u-h'\Vert > \Vert u-h\Vert $. We will show that, for an arbitrary sample path of the user locations $u(t)$, we can find a (possibly history dependent) policy $\psi$ that does not migrate to locations farther away from the user in any timeslot, which performs not worse than policy~$\pi$. 

For an arbitrary sample path of user locations $u(t)$, denote the timeslot $t_{0}$ as the \emph{first} timeslot (starting from $t=0$) in which the service is migrated to somewhere farther away from the user when following policy $\pi$. The initial state at timeslot $t_0$ is denoted by $s(t_0)=(u(t_0),h(t_0))$. When following $\pi$, the state shifts from $s(t_{0})$ to $s'_{\pi}(t_{0})=(u(t_0),h'_\pi(t_0))$, where $\Vert u(t_0)-h'_\pi(t_0)\Vert >\Vert u(t_0)-h(t_0) \Vert $. The subsequent states in this case are denoted by $s_{\pi}(t)=(u(t),h_\pi(t))$ for $t> t_0$.

Now, we define a policy $\psi$ such that the following conditions are satisfied for the given sample path of user locations $u(t)$:
\begin{itemize}
\item The migration actions in timeslots $t<t_0$ are the same when following either $\psi$ or $\pi$.
\item The policy $\psi$ specifies that there is no migration within the timeslots $t\in[t_0,t_m-1]$, where $t_m>t_0$ is a timeslot index that is defined later.
\item The policy $\psi$ is defined such that, at timeslot $t_m$, the service is migrated to $h'_\pi(t_m)$, where $h'_\pi(t_m)$ is the service location (after possible migration at timeslot $t_m$) when following $\pi$.
\item For timeslots $t>t_m$, the migration actions for policies $\psi$ and $\pi$ are the same.
\end{itemize}
For $t> t_0$, the states when following $\psi$ are denoted by $s_{\psi}(t)=(u(t),h_\psi(t))$.

The timeslot $t_m$ is defined as the \emph{first} timeslot after $t_0$ such that the following condition is satisfied:
\begin{enumerate}
\item $\Vert u(t_m)-h_\psi(t_m) \Vert > \Vert u(t_m)-h'_\pi(t_m) \Vert$, i.e., the transmission cost (before migration at timeslot $t_m$) when following $\psi$ is larger than the transmission cost (after possible migration at timeslot $t_m$) when following $\pi$.
\end{enumerate}

Accordingly, within the interval $[t_0+1, t_m-1]$, the transmission cost when following $\psi$ is always less than or equal to the transmission cost when following $\pi$, and there is no migration when following $\psi$. Therefore, for timeslots $t\in [t_0, t_m-1]$, policy $\pi$ cannot bring lower cost than policy~$\psi$. 

In timeslot $t_{m}$, we can always choose a migration action for $\psi$ where the migration cost is smaller than or equal to the sum of the migration costs of $\pi$ within $[t_0,t_m]$. The reason is that $\psi$ can migrate (in timeslot $t_{m}$) following the same migration path as $\pi$ within $[t_0,t_m]$.

It follows that, for timeslots within $[t_0,t_m]$, policy $\pi$ cannot perform better than policy $\psi$, and both policies have the same costs for timeslots within $[0,t_0-1]$ and $[t_m+1,\infty)$. 
The above procedure can be repeated so that all the migration actions to a location farther away from the user can be removed without increasing the overall cost, thus we can redefine $\psi$ to be a policy that removes all such migrations.

We note that the policy $\psi$ can be constructed based on policy $\pi$ without prior knowledge of the user's future locations. For any policy $\pi$, the policy $\psi$ is a policy that does not migrate whenever $\pi$ migrates to a location farther away from the user (corresponding to timeslot $t_0$). Then, it migrates to $h'_\pi(t_m)$ when condition 1 is satisfied (this is the timeslot $t_m$ in the above discussion). 

The policy $\psi$ is a history dependent policy, because its actions depend on the past actions of the underlying policy $\pi$. From \cite[Chapter 6]{puterman2009markov}, we know that history dependent policies cannot outperform Markovian policies for our problem, assuming that the action spaces of both policies are identical for every possible state.  
Therefore, there exists a Markovian policy that does not migrate to a location farther away from the user, which does not perform worse than $\pi$. 
Noting that the optimal policy found from (\ref{eq:objFunc}) and (\ref{eq:bellman}) are Markovian policies, we have proved the theorem.

\section{Proof of Corollary \ref{corollary:constant_cost}}
\label{sec:proofOfCorollaryConstantCost}

The proof follows the same procedure as the proof of Theorem \ref{theorem:notMigrateToFurther}. For any given policy $\pi$ that migrates to locations other than the user location, we show that there exists a policy $\psi$ that does not perform such migration, which performs not worse than the original policy $\pi$. The difference is that $t_0$ is defined as the first timeslot such that $u(t_0)\neq h'_\pi(t_0)$, and $t_m$ is defined as the first timeslot after $t_0$ such that $u(t_m)=h'_\pi(t_m)$. Due to the strict inequality relationship of the cost functions given in the corollary, and because $0<\gamma<1$, we can conclude that $\pi$ is not optimal.

\section{Approximating a General Cost Function with Constant-Plus-Exponential Cost Function}
\label{supSec:approxCostFunc}

We only focus on the migration cost function ${b}(x)$, because ${b}(x)$ and ${c}(y)$ have the same form. For a given cost function $f(x)$, where $x\geq 0$, we would like to approximate it
with the constant-plus-exponential cost function given by ${b}(x)=\beta_{c}+\beta_{l}\mu^{x}$. Note that although we force ${b}(0)=0$ by definition, we relax that restriction here and only consider the smooth part of the cost function for simplicity. We assume that this smoothness can be extended to $x=0$, so that $f(x)$ remains smooth at $x=0$. Under this definition, $f(x)$ may be non-zero at $x=0$. 

Because ${b}(x)$ includes both a constant term and an exponential
term, we cannot obtain an exact analytical expression to minimize
a commonly used error function, such as the mean squared error. We can, however, solve for the parameters that minimize the error
cost function numerically. To reduce the computational complexity
compared to a fully numerical solution, we propose an approximate solution in this section.

\begin{figure}
\center{\subfigure[]{\includegraphics[width=0.8\linewidth]{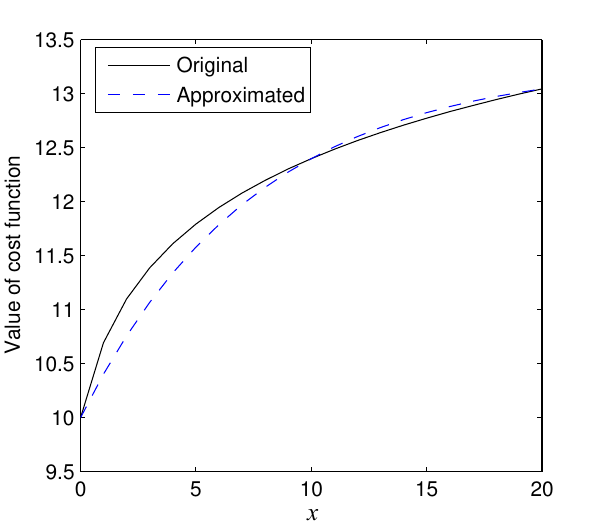}} 

\subfigure[]{\includegraphics[width=0.8\linewidth]{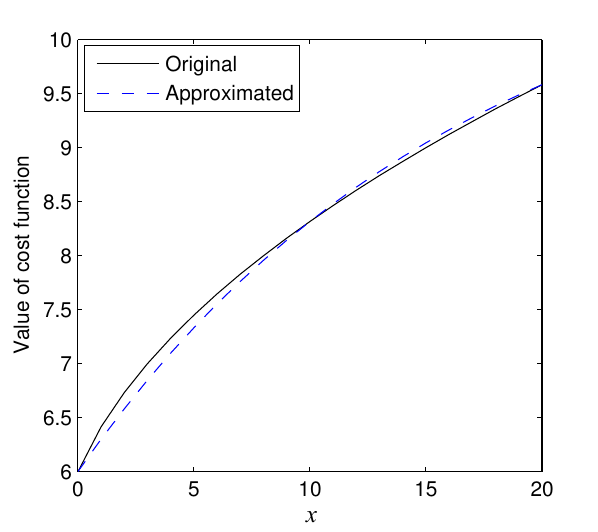}} 

\subfigure[]{\includegraphics[width=0.8\linewidth]{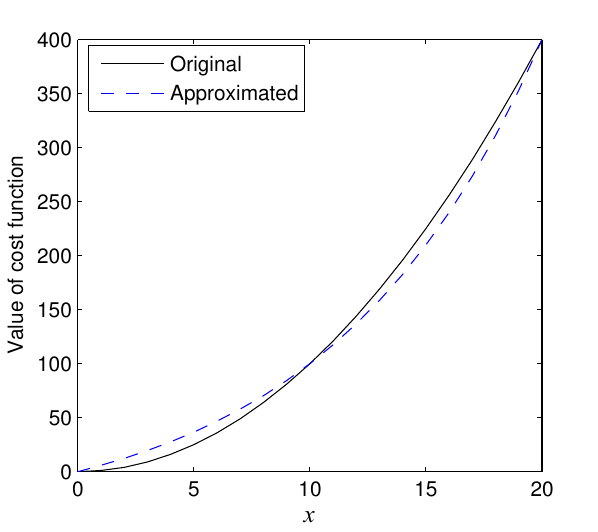}}}

\protect\caption{Examples of approximating a general cost function with exponential
cost function: (a) $f(x)=\ln(x+1)+10$, (b) $f(x)=\sqrt{x+1}+5$,
(c) $f(x)=x^{2}$.}
\label{fig:exampleApprox}
\end{figure}

We are given an\emph{ }integer $w$ that may be chosen according to
practical considerations, and we solve for the parameters $\beta_{c}$,
$\beta_{l}$, and $\mu$ according to the following system of equations:
\begin{align}
\beta_{c}+\beta_{l} & =f(0)\label{eq:approxSystem1}\\
\beta_{c}+\beta_{l}\mu^{w} & =f(w)\label{eq:approxSystem2}\\
\beta_{c}+\beta_{l}\mu^{2w} & =f(2w)\label{eq:approxSystem3}
\end{align}
Subtracting (\ref{eq:approxSystem1}) respectively from (\ref{eq:approxSystem2}) and (\ref{eq:approxSystem3})
 and dividing the two results
gives
\begin{equation}
\frac{\mu^{2w}-1}{\mu^{w}-1}=\frac{f(2w)-f(0)}{f(w)-f(0)}\label{eq:approxRatioEquation}
\end{equation}
subject to $\mu^{w}\neq1$. We can then solve $\mu^{w}$ from
(\ref{eq:approxRatioEquation}) which gives
\begin{equation}
\mu^{w}=\frac{R\pm\sqrt{R^{2}-4(R-1)}}{2}
\end{equation}
where $R\triangleq\frac{f(2w)-f(0)}{f(w)-f(0)}\geq1$. It follows
that we always have $R^{2}-4(R-1)\geq0$ and $\mu^{w}\geq0$. To
guarantee that $\mu^{w}\neq1$, we set $\mu^{w}=1\pm\epsilon_0$
if $\left|\mu^{w}-1\right|<\epsilon_0$, where $\epsilon_0$ is a small
number and the sign is the same as the sign of $\mu^{w}-1$. Then,
we have
\begin{equation}
\mu=\left(\frac{R\pm\sqrt{R^{2}-4(R-1)}}{2}\right)^{\frac{1}{w}}\label{eq:approxTheta}
\end{equation}
From which we can solve
\begin{equation}
\beta_{c}=\frac{f(0)\mu^{w}-f(w)}{\mu^{w}-1}
\end{equation}
\begin{equation}
\beta_{l}=\frac{f(w)-f(0)}{\mu^{w}-1}
\end{equation}
According to (\ref{eq:approxTheta}), we have two possible values
of $\mu$, and correspondingly two possible sets of values of $\beta_{c}$
and $\beta_{l}$. To determine which is better, we compute the sum
squared error $\sum_{x=0}^{2w}\left(f(x)-\left(\beta_{c}+\beta_{l}\mu^{x}\right)\right)^{2}$
for each parameter set, and choose the parameter set that produces
the smaller sum squared error.

Some examples of approximation results are shown in Fig.~\ref{fig:exampleApprox}.

\section{ Proof of Theorem \ref{prop:diffEquSolution} }
\label{sec:proofOfDiffEquSolution}

Note that (\ref{eq:balanceDiscountedSumCost}) is a difference equation \cite{elaydi2005introductionDiffEqu}. 
Because we only migrate at states $\{n_k \}$, we have $a(d)= d$ for $d\in (n_{k-1},n_{k})$ ($\forall k$). From (\ref{eq:balanceDiscountedSumCost}), for $d \in  (n_{k-1},n_{k})$, we have
\begin{equation}
V(d)=\delta_{c}+\delta_{l}\theta^{d}+\gamma\sum_{d_1=d-1}^{d+1}P{[{d,d_1}]} \cdot V(d_1)\,.\label{eq:sumCostDiffEqu}
\end{equation}
We can rewrite (\ref{eq:sumCostDiffEqu}) as
\begin{equation}
V(d)=\phi_{1}V(d-1)+\phi_{2}V(d+1)+\phi_{3}+\phi_{4}\theta^{d}. \label{eq:sumCostDiffEquSimlified}
\end{equation}

This difference function has characteristic
roots as expressed in (\ref{eq:diffEquMConst}).
When $0<\gamma<1$, we always have $m_{1}\neq m_{2}$. Fixing an index $k$, for $d\in (n_{k-1},n_{k})$, the homogeneous
equation of (\ref{eq:sumCostDiffEquSimlified}) has general solution
\begin{equation}
V_{h}(d)=A_{k}m_{1}^{d}+B_{k}m_{2}^{d}\,.
\end{equation} 
To solve the non-homogeneous equation (\ref{eq:sumCostDiffEquSimlified}),
we try a particular solution in the form of
\begin{equation}
V_{p}(d)=\begin{cases}
D+H\cdot\theta^{d} & \textrm{if }1-\frac{\phi_{1}}{\theta}-\phi_{2}\theta\neq0\\
D+Hd\cdot\theta^{d} & \textrm{if }1-\frac{\phi_{1}}{\theta}-\phi_{2}\theta=0
\end{cases}\label{eq:particularSolution}
\end{equation}
where $D$ and $H$ are constant coefficients.
By substituting (\ref{eq:particularSolution}) into (\ref{eq:sumCostDiffEquSimlified}),
we get
(\ref{eq:diffEquDConst}) and (\ref{eq:diffEquHConst}). 

Because the expression in (\ref{eq:sumCostDiffEquSimlified}) is related to $d-1$ and $d+1$, the result also holds for the closed interval.
Therefore, $V(d)$ can be expressed as (\ref{eq:finalSolution})  for $d\in [n_{k-1},n_{k}]$ ($\forall k$).

\section{ Proof of Theorem \ref{prop:approxErrorBound} }
\label{sec:proofOfApproxErrorBound}

The proof is completed in three steps. First, we modify the states of the 2-D MDP in such a way that the aggregate transition probability from any state $(i'_0,j'_0)\neq(0,0)$ to ring $i_1=i'_0+1$ (correspondingly, $i_1=i'_0-1$) is $2.5r$ (correspondingly, $1.5r$). We assume that we use a given policy on both the original and modified \mbox{2-D} MDPs, and show a bound on the difference in the discounted sum costs for these two MDPs. In the second step, we show that the modified \mbox{2-D} MDP is equivalent to the distance-based MDP. This can be intuitively explained by the reason that the modified 2-D MDP has the same transition probabilities as the distance-based MDP when only considering the ring index $i$, and also, the one-timeslot cost $C_a(e(t))$ only depends on $\Vert e(t)-a(e(t)) \Vert$ and $\Vert a(e(t)) \Vert$, both of which can be determined from the ring indices of $e(t)$ and $a(e(t))$. The third step uses the fact that the optimal policy for the distance-based MDP cannot bring a higher discounted sum cost for the distance-based MDP (and hence the modified 2-D MDP) than any other policy. By utilizing the error bound found in the first step twice, we prove the result. 

The detailed proof is given as follows.

Recall that among the neighbors $\left(i',j'\right)$
of a cell $(i,j)$ in the 2-D offset-based MDP $\{e(t)\}$, when $i>0$, we have two (or, correspondingly, one) cells with
$i'=i-1$, and two (or, correspondingly, three) cells with $i'=i+1$. 
To ``even out'' the different number of neighboring cells, we define
a new (modified) MDP $\left\{ g(t)\right\} $ for the 2-D offset-based model, where the
states are connected as in the original 2-D MDP, but the transition
probabilities are different, as shown in Fig.~\ref{fig:mod2DMarkov}.

In the modified MDP $\left\{ g(t)\right\} $, the transition probabilities starting from state $(0,0)$ to each of its neighboring cells have the same value $r$. For all the other states $(i,j)\neq(0,0)$, they are defined as follows:
\begin{itemize}
\item The transition probability to each of its neighbors with the same ring index $i$ is $r$.
\item If state $(i,j)$ has two (correspondingly, one) neighbors in the lower ring $i-1$, then the transition probability to each of its neighbors in the lower ring is $\frac{1.5}{2}r$ (correspondingly, $1.5r$).
\item if state $(i,j)$ has two (correspondingly, three) neighbors in the higher ring $i+1$, then the transition probability to each of its neighbors in the higher ring is $\frac{2.5}{2}r$ (correspondingly, $\frac{2.5}{3}r$).
\end{itemize}

We denote the discounted sum cost from the original MDP $\{e(t)\}$ by $V(i,j)$, and denote that from the modified MDP $\left\{ g(t)\right\}$ by $U(i,j)$, where $(i,j)$ stands for the initial state in the discounted sum cost definition.

\begin{figure}
\center{\subfigure[]{\includegraphics[width=0.9\columnwidth]{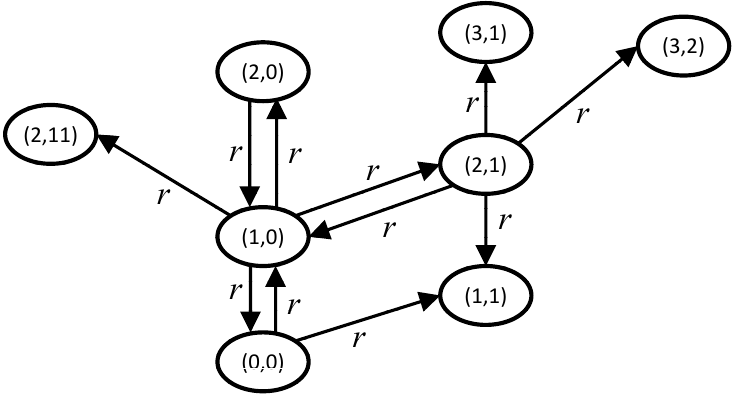}}}

\center{\subfigure[]{\includegraphics[width=0.9\columnwidth]{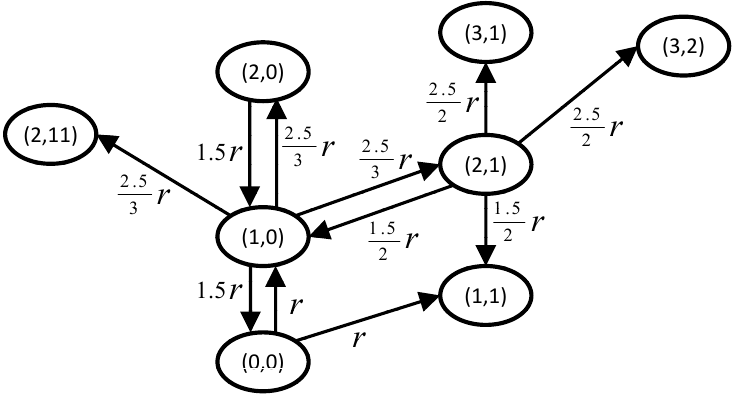}}}

\protect\caption{Illustration of original and modified 2-D MDPs, only some exemplar
states and transition probabilities are shown: (a) original, (b) modified.}
\label{fig:mod2DMarkov} 
\end{figure}

\subsection*{Part I -- Upper bound on the difference between $V(i,j)$
and $U(i,j)$ for a given policy $\pi$}

Assume we have the same policy $\pi$ for the original and modified  MDPs.
Then, the balance equation of $V(i,j)$ is
\begin{equation}
V(i,j)=C_{a}(i,j)+\gamma\!\left(\!\!\left(1-6r\right)V(a(i,j))+r\!\!\!\!\!\!\!\!\!\!\!\!\sum_{\left(i',j'\right)\in\mathcal{N}\left(a(i,j)\right)}\!\!\!\!\!\!\!\!\!\!\!\! V(i',j')\!\!\right)\label{eq:balance2DApproxW}
\end{equation}
where $a(i,j)$ is the new state after possible migration at state $(i,j)$, and $\mathcal{N}\left(a(i,j)\right)$ is the set of states that are
neighbors of state $a(i,j)$. 

For $U(i,j)$, we have
\begin{align}
& U(i,j)  =C_{a}(i,j)+\gamma\Bigg(\left(1-6r\right)U(a(i,j))\nonumber \\
 & +\!\!\!\!\!\!\sum_{\left(i',j'\right)\in\mathcal{N}\left(a(i,j)\right)}\!\!\!\!\!\! P{[{g'(t)=a(i,j), g(t+1)=(i',j')}]} \cdot U(i',j')\Bigg)\label{eq:balance2DApproxU}
\end{align}
where $P{[{g'(t)=a(i,j), g(t+1)=(i',j')}]}$ is the transition probability of the modified MDP $\{g(t)\}$ as specified earlier.

\begin{figure*}
\begin{align}
 & \left|U(i,j)-V(i,j)\right|\nonumber \\
 & =\gamma\Big|\left(1-6r\right)\left(U(a(i,j))-V(a(i,j))\right)\nonumber \\
 & \quad+\!\!\!\!\!\!\sum_{\left(i',j'\right)\in\mathcal{N}\left(a(i,j)\right)}\!\!\!\!\!\!P{[{g'(t)=a(i,j), g(t+1)=(i',j')}]} \cdot \left(U(i',j')-V(i',j')\right)\nonumber \\
 & \quad\pm0.5\lambda_{i_{a(i,j)}}r\left(\frac{\sum_{\left(i',j'\right)\in\mathcal{N}^{+}\left(a(i,j)\right)}V(i',j')}{\left|\mathcal{N}^{+}\left(a(i,j)\right) \right|}-\frac{\sum_{\left(i',j'\right)\in\mathcal{N}^{-}\left(a(i,j)\right)}V(i',j')}{\left|\mathcal{N}^{-}\left(a(i,j)\right) \right|}\right)\Big|\label{eq:2DErrorTermEqu1}\\
 & \leq\gamma\epsilon+0.5\gamma r\left|\frac{\sum_{\left(i',j'\right)\in\mathcal{N}^{+}\left(a(i,j)\right)}V(i',j')}{\left|\mathcal{N}^{+}\left(a(i,j)\right) \right|}-\frac{\sum_{\left(i',j'\right)\in\mathcal{N}^{-}\left(a(i,j)\right)}V(i',j')}{\left|\mathcal{N}^{-}\left(a(i,j)\right) \right|}\right|\label{eq:2DErrorTermEqu2}\\
 & \leq\gamma\epsilon+0.5\gamma r\max_{i,j,j':(i+1,j)\in\mathcal{N}_{2}\left(i-1,j'\right)}\left|\left(V(i+1,j)-V(i-1,j')\right)\right|\label{eq:2DErrorTermEqu3}
\end{align}

\hrulefill
\end{figure*}

In the following, let $i_{a(i,j)}$ denote the ring index of $a(i,j)$. We define sets 
\begin{align}
& \mathcal{N}^{-}(a(i,j))=\left\{ \left(i',j'\right)\in\mathcal{N}\left(a(i,j)\right):i'=i_{a(i,j)}-1\right\} \nonumber \\
& \mathcal{N}^{+}(a(i,j))=\left\{ \left(i',j'\right)\in\mathcal{N}\left(a(i,j)\right):i'=i_{a(i,j)}+1\right\} \nonumber 
\end{align}
to represent the neighboring states of $a(i,j)$ that are respectively in the lower and higher rings. We use $\left|\cdot\right|$ to denote the number of elements in a set.

Assume $\left|U(i,j)-V(i,j)\right|\leq\epsilon$ for all $i$ and
$j$, and the value of $\epsilon$ is unknown for now. We subtract
(\ref{eq:balance2DApproxW}) from (\ref{eq:balance2DApproxU}), and
then take the absolute value, yielding (\ref{eq:2DErrorTermEqu1}),
(\ref{eq:2DErrorTermEqu2}), and (\ref{eq:2DErrorTermEqu3}) which are explained below, where the set $\mathcal{N}_{2}\left(i,j\right)$ is the set of states that are two-hop neighbors of state $(i,j)$, 
the variable $\lambda_{i_{a(i,j)}}=0$ when $i_{a(i,j)}=0$, and $\lambda_{i_{a(i,j)}}=1$ when $i_{a(i,j)}>0$. 

The first two terms of (\ref{eq:2DErrorTermEqu1}) subtract the discounted
sum cost of the original MDP $\{e(t)\}$ from that of the modified MDP $\{g(t)\}$, by assuming
that both chains have the \emph{same} transition probabilities specified by the modified
MDP. The difference in their transition proabilities is captured by
the last term of (\ref{eq:2DErrorTermEqu1}). There is no difference
in the transition probabilities when $i_{a(i,j)}=0$, thus $\lambda_{i_{a(i,j)}}=0$
when $i_{a(i,j)}=0$. 

In the following, we consider $i_{a(i,j)}>0$ and further explain
the last term of (\ref{eq:2DErrorTermEqu1}). We first note that there
is difference in the transition probabilities only when moving to
the lower or higher ring:
\begin{itemize}
\item The \emph{sum} probability of moving to the lower ring in  $\left\{ g(t)\right\} $ is by $0.5r$ smaller (or, correspondingly, greater) than that in $\{e(t)\}$.
\item The \emph{sum} probability of moving to the higher ring in $\left\{ g(t)\right\} $ is by $0.5r$ greater (or, correspondingly, smaller) than that in $\{e(t)\}$.
\end{itemize}
Therefore, the transition probablity difference for each neighboring state in the lower (or higher) ring is $\pm0.5r$ divided by the number of neighbors in the lower (or higher) ring. Also note that the probablity difference for lower and higher rings have opposite signs. This explains the last term of (\ref{eq:2DErrorTermEqu1}), which captures the difference in the transition probabilities and its impact on the discounted sum costs. 

The inequality in (\ref{eq:2DErrorTermEqu2}) is from the triangle
inequality. We note that the subtraction in the last term of (\ref{eq:2DErrorTermEqu1})
only occurs on $V(i',j')$ values that are two-hop neighbors, so we
have the inequality in (\ref{eq:2DErrorTermEqu3}) by replacing the
value with the maximum.

From (\ref{eq:2DErrorTermEqu3}), we can obtain a balance equation
for the upper bound of $\left|U(i,j)-V(i,j)\right|$, which is 
\begin{equation}
\epsilon\!=\!\gamma\epsilon+0.5\gamma r\max_{i,j,j':(i+1,j)\in\mathcal{N}_{2}\left(i-1,j'\right)}\!\left|\left(V(i\!+\!1,j)\!-\! V(i\!-\!1,j')\right)\right|
\end{equation}
Because $0<\gamma<1$, the value of $V(i,j)$ converges after a number
of iterations according to (\ref{eq:balance2DApproxW}) \cite[Chapter 6]{puterman2009markov}, so $\left|\left(V(i+1,j)-V(i-1,j)\right)\right|$
also converges, and the value of $\epsilon$ can be solved by
\begin{equation}
\epsilon_{V}\!=\!\frac{\gamma r\max_{i,j,j':(i+1,j)\in\mathcal{N}_{2}\left(i-1,j'\right)}\left|\left(V(i\!+\!1,j)\!-\! V(i\!-\!1,j')\right)\right|}{2(1-\gamma)}\label{eq:epsilonExpressionW}
\end{equation}

Note that the above argument also applies when interchanging $V$
and $U$, which means that an alternative upper bound of the cost
is
\begin{equation}
\epsilon_{U}\!=\!\frac{\gamma r\max_{i,j,j':(i+1,j)\in\mathcal{N}_{2}\left(i-1,j'\right)}\left|\left(U(i\!+\!1,j)\!-\! U(i\!-\!1,j')\right)\right|}{2(1-\gamma)}\label{eq:epsilonExpressionU}
\end{equation}

The upper bound can also be expressed as $\epsilon=\min\left\{ \epsilon_{V},\epsilon_{U}\right\} $,
but either $\epsilon_{V}$ or $\epsilon_{U}$ may have the smaller
value.

\subsection*{Part II -- Optimal policy for the modified 2-D MDP $\left\{ g(t)\right\} $ is equivalent to the optimal policy for the distance-based MDP $\left\{ d(t)\right\} $}

We note that the optimal policy of an MDP can be found from value
iteration \cite[Chapter 6]{puterman2009markov}. For the modified
MDP $\left\{ g(t)\right\} $, we initialize with a never migrate
policy, which gives the initial value function $U_{0}(i,j)={c}\left(i\right)$,
satisfying $U_{0}(i,j)=U_{0}(i,j')$ for all $i$ and $j\neq j'$. 

Suppose $U_{n}(i,j)=U_{n}(i,j')$ for all $i$ and $j\neq j'$. In
each iteration, we use the following equation to obtain the new value
function and the corresponding actions for each $i$ and $j$: 
\begin{align}
 & U_{n+1}(i,j)=\min_{a}\Bigg\{ C_{a}(i,j)\nonumber \\
 & +\gamma\sum_{i'}\sum_{j'}P{[{g'(t)=a(i,j), g(t+1)=(i',j')}]} \cdot U_{n}(i',j')\Bigg\}\label{eq:approxRatioProofValueIteration}
\end{align}

From the hexagon model in Fig. \ref{fig:hexCell}, we can see that
for ring indices $i$ and $i'$, where $i'<i$, we can always reach
from state $(i,j)$ to a state in ring $i'$ with $i-i'$ hops, regardless
of the index $j$. In the $(n+1)$th iteration, if it is optimal to
migrate from state $(i,j)$ to a state with ring index $i'$, then
the migration destination must be $i-i'$ hops away from origin state,
because $U_{n}(i',j)=U_{n}(i',j')$ for all $i'$ and $j\neq j'$, 
it cannot be beneficial to migrate to somewhere farther away. 

Further, if it is optimal to migrate at a state $(i,j)$ to a state in ring $i'$, it must be optimal to migrate at states $(i,j)$ for all $j$ to a state (which may not be the same state) in ring $i'$, bringing the same cost, i.e. $U_{n+1}(i,j)=U_{n+1}(i,j')$ for $j\neq j'$. This
is due to the symmetry of cost functions (in the sense that $U_{n}(i,j)=U_{n}(i,j')$ for all $i$ and $j\neq j'$) and symmetry of transition probabilities (in the sense that the sum probability of reaching ring $i-1$ from any state in ring $i$ is the same, and the sum probability of reaching ring $i+1$ from any state in ring $i$ is also the same). Similarly, if it is optimal not to migrate at a state $(i,j)$, then it is optimal not to migrate at states $(i,j)$ for all $j$, which also brings $U_{n+1}(i,j)=U_{n+1}(i,j')$ for any $j\neq j'$.

Because the value iteration converges to the optimal policy and its corresponding cost as $n\rightarrow\infty$, for the optimal policy of the modified MDP $\left\{ g(t)\right\} $, we have the same discounted
sum cost for states in the same ring, i.e. $U^{*}(i,j)=U^{*}(i,j')$
for all $i$ and $j\neq j'$. Meanwhile, for a given $i$, the optimal actions $a^{*}(i,j)$ and $a^{*}(i,j')$ for any $j\neq j'$ have the same ring index $i_{a^{*}(i,j)}$.

Since the optimal actions $a^{*}(i,j)$ only depend on the ring index
$i$ and the ring index of $a^{*}(i,j)$ does not change with $j$,
the optimal policy for $\{g(t)\}$ can be directly mapped to a policy
for the distance-based MDP $\{d(t)\}$. A policy for $\{d(t)\}$ can also be mapped to a policy for $\{g(t)\}$ by considering the shortest path between different states in $\{g(t)\}$, as discussed in Section \ref{sub:approxMethodDescription}. This implies that there is a one-to-one mapping between the optimal policy for $\{g(t)\}$ and a policy for $\{d(t)\}$, because the optimal policy for $\{g(t)\}$ also only migrates along the shortest path between states. 

We now show that the policy for $\{d(t)\}$ obtained from the optimal policy for $\{g(t)\}$ is optimal for $\{d(t)\}$. To find the optimal policy for $\{d(t)\}$, we can perform value iteration according to the following update equation:
\begin{align}
 & U_{n+1}(i)=\min_{a}\Bigg\{ C_{a}(i)\nonumber \\
 & +\gamma\sum_{i'}\left(\sum_{j'}P{[{g(t)=a(i),g(t+1)=(i',j')}]} \right)U_{n}(i')\Bigg\}\label{eq:approxRatioProofValueIterationNoj}
\end{align}
The difference between (\ref{eq:approxRatioProofValueIteration})
and (\ref{eq:approxRatioProofValueIterationNoj}) is that (\ref{eq:approxRatioProofValueIterationNoj})
does not distinguish the actions and value functions with different
$j$ indices. Recall that for the modified 2-D MDP $\{g(t)\}$, we have $U_{n}(i,j)=U_{n}(i,j')$
for all $n$, $i$ and $j\neq j'$, so the index $j$ can be natually
removed from the value functions. Further, if it is optimal to migrate
at state $(i,j)$ to a state in ring $i'$, it must be optimal to
migrate at states $(i,j)$ for all $j$ to a state (which may not
be the same state) in ring $i'$. The migration cost for different
$j$ are the same because they all follow the shortest path from state
$(i,j)$ to ring $i'$. If it is optimal not to migrate at state $(i,j)$,
then it is optimal not to migrate at states $(i,j)$ for all $j$.
Therefore, we can also remove the $j$ index associated with the actions,
without affecting the value function. It follows that the optimal
policy for $\{g(t)\}$ is equivalent to the optimal policy for $\{d(t)\}$, both bringing the same value functions (discounted sum costs).

\subsection*{Part III -- Error bound for distance-based approximation}

By now, we have shown the upper bound on the discounted sum cost difference
between the original and modified 2-D MDPs $\{e(t)\}$ and $\{g(t)\}$, when both MDPs use the
same policy. We have also shown that the optimal policy for the modified
2-D MDP $\{g(t)\}$ is equivalent to the optimal policy for the distance-based MDP $\{d(t)\}$. Note that
the true optimal cost is obtained by solving for the optimal policy
for the original 2-D MDP $\{e(t)\}$, and the approximate optimal cost is obtained
by applying the optimal policy for $\{d(t)\}$ to $\{e(t)\}$.
In the following, we consider the upper bound on the difference between
the true and approximate optimal discounted sum costs.

We start with a (true) optimal policy $\pi_{\textrm{true}}^{*}$ for $\{e(t)\}$, denote the discounted sum costs from this policy as $V_{\pi_{\textrm{true}}^{*}}(i,j)$.
When using the same policy on $\{g(t)\}$,
the difference between the costs $U_{\pi_{\textrm{true}}^{*}}(i,j)$ and $V_{\pi_{\textrm{true}}^{*}}(i,j)$  satisfies the upper bound given in (\ref{eq:epsilonExpressionW}), i.e.
\begin{equation}
\label{eq:trueOptPolicyCostDiff}
U_{\pi_{\textrm{true}}^{*}}(i,j)-V_{\pi_{\textrm{true}}^{*}}(i,j)\leq\epsilon_{V_{\pi_{\textrm{true}}^{*}}}
\end{equation}

Since $V_{\pi_{\textrm{true}}^{*}}(i,j)$ are the optimal costs, we have 
\begin{align}
\label{eq:sumMaxToOneSlotMax}
 & \max_{i,j,j':(i+1,j)\in\mathcal{N}_{2}\left(i-1,j'\right)}\left|\left(V_{\pi_{\textrm{true}}^{*}}(i\!+\!1,j)\!-\! V_{\pi_{\textrm{true}}^{*}}(i\!-\!1,j')\right)\right|\nonumber \\
 & \leq\max_{x}\left\{ {b}\left(x+2\right)-{b}\left(x\right)\right\} 
\end{align}
because, otherwise, there exists at least one pair of states $(i+1,j)$
and $(i-1,j')$ for which it is beneficial to migrate from state $(i+1,j)$ to state $(i-1,j')$, according to a 2-D extension of (\ref{eq:costRelationship}). 
Note that further migration may occur at state $(i-1,j')$, thus we take the maximum over $x$ in the expression.
The cost after performing such migration is upper bounded by (\ref{eq:sumMaxToOneSlotMax}), which contradicts with the fact that $\pi_{\textrm{true}}^{*}$ is optimal.

Define
\begin{equation}
\epsilon_{c}=\frac{\gamma r\max_{x}\left\{ {b}\left(x+2\right)-{b}\left(x\right)\right\} }{2(1-\gamma)}
\end{equation}
From (\ref{eq:epsilonExpressionW}), (\ref{eq:trueOptPolicyCostDiff}) and (\ref{eq:sumMaxToOneSlotMax}), we have
\begin{equation}
U_{\pi_{\textrm{true}}^{*}}(i,j)-V_{\pi_{\textrm{true}}^{*}}(i,j)\leq\epsilon_{c}\label{eq:approxRatioProofTrueOptError}
\end{equation}

According to the equivalence of $\{g(t)\}$ and $\{d(t)\}$, we know
that the optimal policy $\pi_{\textrm{appr}}^{*}$ of $\{d(t)\}$ is also optimal for $\{g(t)\}$. Hence, we have
\begin{equation}
U_{\pi_{\textrm{appr}}^{*}}(i,j)\leq U_{\pi_{\textrm{true}}^{*}}(i,j)\label{eq:approxRatioProofApproxTrueRelationship}
\end{equation}
because the cost from the optimal policy cannot be higher than the cost from any other policy.

When using the policy $\pi_{\textrm{appr}}^{*}$ on $\{e(t)\}$, we
get costs $V_{\pi_{\textrm{appr}}^{*}}(i,j)$. From (\ref{eq:epsilonExpressionU}),
and because $U_{\pi_{\textrm{appr}}^{*}}(i,j)$ is the optimal cost for $\{g(t)\}$,
we have
\begin{equation}
V_{\pi_{\textrm{appr}}^{*}}(i,j)-U_{\pi_{\textrm{appr}}^{*}}(i,j)\leq\epsilon_{U_{\pi_{\textrm{appr}}^{*}}}\leq\epsilon_{c}\label{eq:approxRatioProofApproxOptError}
\end{equation}

From (\ref{eq:approxRatioProofTrueOptError}), (\ref{eq:approxRatioProofApproxTrueRelationship}),
and (\ref{eq:approxRatioProofApproxOptError}), we get
\begin{equation}
V_{\pi_{\textrm{appr}}^{*}}(i,j)-V_{\pi_{\textrm{true}}^{*}}(i,j)\leq2\epsilon_{c}
\end{equation}
which completes the proof.

\section{ Proof of Theorem \ref{prop:rEstUnbiased} }
\label{sec:proofOfREstUnbiased}

We note that in (\ref{eq:realWorldEst1}), $m_{n}(\tau)$ and $m'_{n}(\tau)$ are both random variables respectively representing the total number of users associated to BS (located in cell) $n$ in slot $\tau$ and the number of users that have moved out of cell $n$ at the end of slot $\tau$. The values of $m_{n}(\tau)$ and $m'_{n}(\tau)$ are random due to the randomness of user mobility. 

According to the mobility model, each user moves out of its current cell at the end of a timeslot with probability $6r$, where we recall that $r\leq \frac{1}{6}$ by definition. Hence, under the condition that $m_{n}(\tau)=k$, $m'_{n}(\tau)$ follows the binomial distribution with parameters $k$ and $6r$. Thus, the conditional probability
\begin{equation}
\Pr\left\{m'_{n}(\tau) =k' |m_{n}(\tau)=k\right\} = \left(\begin{array}{c}
k\\
k'
\end{array}\right)(6r)^{k'}(1-6r)^{k-k'} 
\label{eq:rEstUnbiasedProof0}
\end{equation}
for all $0\leq k' \leq k$.

From the expression of the mean of binomially distributed random variables, we know that the conditional expectation
\begin{equation}
\expect{m'_{n}(\tau)|m_{n}(\tau)}=6r\cdot m_{n}(\tau).
\label{eq:rEstUnbiasedProof0_1}
\end{equation}
Combining (\ref{eq:realWorldEst1})--(\ref{eq:realWorldEst3}), we have
\begin{equation}
\hat{r}=\frac{1}{6 N_{\textrm{BS}} T_w} \sum_{n\in\mathcal{N}_\textrm{BS}} \sum_{\tau=t-T_w}^{t-1}\frac{m'_{n}(\tau)}{m_{n}(\tau)}. 
\label{eq:rEstUnbiasedProof1}
\end{equation}
We then have
\begin{align}
\expect{\hat{r}} & =\expect{ \frac{1}{6 N_{\textrm{BS}} T_w} \sum_{n\in\mathcal{N}_\textrm{BS}} \sum_{\tau=t-T_w}^{t-1}\frac{m'_{n}(\tau)}{m_{n}(\tau)} } \nonumber \\
& =  \frac{1}{6 N_{\textrm{BS}} T_w} \sum_{n\in\mathcal{N}_\textrm{BS}} \sum_{\tau=t-T_w}^{t-1} \expect{\frac{\expect{m'_{n}(\tau)|m_{n}(\tau)}}{m_{n}(\tau)} } \nonumber \\
& =  \frac{1}{6 N_{\textrm{BS}} T_w} \sum_{n\in\mathcal{N}_\textrm{BS}} \sum_{\tau=t-T_w}^{t-1} \expect{\frac{6r\cdot m_{n}(\tau)}{m_{n}(\tau)}}  \nonumber \\
& =  \frac{r}{N_{\textrm{BS}} T_w} N_{\textrm{BS}} T_w  \nonumber \\
& = r \nonumber 
\end{align}
where the second equality follows from the law of iterated expectations.

\section{ Proof of Theorem \ref{prop:rEstVarianceUpperBound} }
\label{sec:proofOfREstVarianceUpperBound}

We define $\mathcal{M} \triangleq \{ m_n(\tau) : \forall n \in \mathcal{N}_{\textrm{BS}}, \tau \in [t-T_w, t-1] \}$ as the set of number of users at all BSs and all timeslots considered in the estimation. We first introduce the following lemma.

\begin{lemma}
\label{lemma:rEstVariance}
Assume that Assumption \ref{assumption:independence} is satisfied and each user follows 2-D random walk (defined at the beginning of Section \ref{section:EstDiscussion}) with parameter $r$.
The variance of estimator $\hat{r}$, under the condition that $\mathcal{M}$ is given, is
\begin{equation}
\mathrm{Var} \{ \hat{r} | \mathcal{M}\} = \frac{r (1-6r)}{6 N^2_{\textrm{BS}} T^2_w} 
\left( \sum_{n\in\mathcal{N}_\textrm{BS}} \sum_{\tau=t-T_w}^{t-1} 
\frac{1}{m_{n}(\tau)} \right)
\label{eq:rEstVarianceProp1}
\end{equation}
where the \emph{conditional variance} $\mathrm{Var} \{ \hat{r} | \mathcal{M}\}$ is defined as
\begin{equation}
\mathrm{Var} \{ \hat{r} | \mathcal{M}\} \triangleq \expect{\hat{r}^2\big| \mathcal{M}} - \left(\expect{\hat{r}| \mathcal{M}}\right)^2 .
\end{equation}
\end{lemma}
\begin{proof}
Taking the conditional expectation on both sides of (\ref{eq:rEstUnbiasedProof1}), we have
\begin{align}
\expect{\hat{r}|\mathcal{M}} & =\expect{\left. \frac{1}{6 N_{\textrm{BS}} T_w} \sum_{n\in\mathcal{N}_\textrm{BS}} \sum_{\tau=t-T_w}^{t-1}\frac{m'_{n}(\tau)}{m_{n}(\tau)} \right|\mathcal{M}} \nonumber \\
& =  \frac{1}{6 N_{\textrm{BS}} T_w} \sum_{n\in\mathcal{N}_\textrm{BS}} \sum_{\tau=t-T_w}^{t-1}\frac{\expect{m'_{n}(\tau)|m_{n}(\tau)}}{m_{n}(\tau)}  \nonumber \\
& =  \frac{1}{6 N_{\textrm{BS}} T_w} \sum_{n\in\mathcal{N}_\textrm{BS}} \sum_{\tau=t-T_w}^{t-1}\frac{6r\cdot m_{n}(\tau)}{m_{n}(\tau)}  \nonumber \\
& =  \frac{r}{N_{\textrm{BS}} T_w} N_{\textrm{BS}} T_w  \nonumber \\
& = r 
\label{eq:rEstVarianceProofCondExpect}
\end{align}
where the second equality is because $m'_n(\tau)$ is independent of $m_n(\tilde{\tau})$, $m_{\tilde{n}}(\tau)$, and $m_{\tilde{n}}(\tilde{\tau})$ (where $\tilde{n} \neq n$ and $\tilde{\tau} \neq \tau$) when $m_n(\tau)$ is given, according to Assumption~\ref{assumption:independence}.
We thus have $(\expect{\hat{r} | \mathcal{M}})^2 = r^2$. 

We focus on evaluating $\expect{\hat{r}^2| \mathcal{M}}$ in the following.
From (\ref{eq:rEstUnbiasedProof1}), we have
\begin{align}
\hat{r}^2 & =\frac{1}{36 N^2_{\textrm{BS}} T^2_w}
\left( \sum_{n\in\mathcal{N}_\textrm{BS}} \sum_{\tau=t-T_w}^{t-1}\frac{m'_{n}(\tau)}{m_{n}(\tau)} \right)^2 \nonumber \\
& = \frac{1}{36 N^2_{\textrm{BS}} T^2_w} \left(
\sum_{n\in\mathcal{N}_\textrm{BS}} \sum_{\tau=t-T_w}^{t-1}\frac{(m'_{n}(\tau))^2}{(m_{n}(\tau))^2} \right. \\
& \quad\quad \left. + \sum_{\substack{n_1, n_2\in\mathcal{N}_\textrm{BS}; \\
\tau_1, \tau_2 \in [t-T_w, t-1]; \\
n_1 \neq n_2 \textrm{ and/or } \tau_1 \neq \tau_2}} \frac{m'_{n_1}(\tau_1)}{m_{n_1}(\tau_1)}\cdot \frac{m'_{n_2}(\tau_2)}{m_{n_2}(\tau_2)} \right).
\label{eq:rEstVarianceProof1}
\end{align}
We now consider the two parts in 
(\ref{eq:rEstVarianceProof1}). From the proof of Theorem \ref{prop:rEstUnbiased}, we know that $m'_n(\tau)$ follows the binomial distribution when $m_n(\tau)$ is given. Further,  when $m_n(\tau)$ is given, $m'_n(\tau)$ is independent of $m_n(\tilde{\tau})$, $m_{\tilde{n}}(\tau)$, and $m_{\tilde{n}}(\tilde{\tau})$ (where $\tilde{n} \neq n$ and $\tilde{\tau} \neq \tau$), according to Assumption \ref{assumption:independence}.
Thus, we have
\begin{align}
\expect{\frac{(m'_{n}(\tau))^2}{(m_{n}(\tau))^2} \Bigg| \mathcal{M}}
& = \frac{\expect{(m'_{n}(\tau))^2\big| m_{n}(\tau)}}
{(m_{n}(\tau))^2}  \nonumber \\
& = \frac{m_{n}(\tau) \cdot 6r \cdot (1-6r) + 36r^2 (m_{n}(\tau))^2}
{(m_{n}(\tau))^2} \nonumber \\
& = \frac{6r \cdot (1-6r)}{m_{n}(\tau)}  + 36r^2  
\label{eq:rEstVarianceProof2}
\end{align}
where the second equality is a known result for binomially distributed random variables.
We also have
\begin{align}
& \expect{\frac{m'_{n_1}(\tau_1)}{m_{n_1}(\tau_1)}\cdot \frac{m'_{n_2}(\tau_2)}{m_{n_2}(\tau_2)} \Bigg| \mathcal{M}} \nonumber \\
& = \expect{\frac{m'_{n_1}(\tau_1)}{m_{n_1}(\tau_1)} \Bigg| m_{n_1}(\tau_1)}\cdot  \expect{\frac{m'_{n_2}(\tau_2)}{m_{n_2}(\tau_2)} \Bigg| m_{n_2}(\tau_2)} \nonumber \\
& = \frac{\expect{m'_{n_1}(\tau_1) | m_{n_1}(\tau_1)}}{m_{n_1}(\tau_1)} \cdot  \frac{\expect{m'_{n_2}(\tau_2)| m_{n_2}(\tau_2)} }{m_{n_2}(\tau_2)} \nonumber \\
& = 36r^2 
\label{eq:rEstVarianceProof3}
\end{align}
for $n_1 \neq n_2$ and/or $\tau_1 \neq \tau_2$, 
where the first equality follows from the fact that $m'_{n_1}(\tau_1)$ and $m'_{n_2}(\tau_2)$ are independent when $m_{n_1}(\tau_1)$ and $m_{n_2}(\tau_2)$ are given (according to Assumption \ref{assumption:independence}), the last equality follows from (\ref{eq:rEstUnbiasedProof0_1}).

We now take the conditional expectation on both sides of (\ref{eq:rEstVarianceProof1}), and substitute corresponding terms with (\ref{eq:rEstVarianceProof2}) and (\ref{eq:rEstVarianceProof3}). This yields
\begin{align}
& \expect{\hat{r}^2\big| \mathcal{M}} \nonumber \\
& = \frac{1}{36 N^2_{\textrm{BS}} T^2_w} \left(
\sum_{n\in\mathcal{N}_\textrm{BS}} \sum_{\tau=t-T_w}^{t-1} 
\expect{\frac{(m'_{n}(\tau))^2}{(m_{n}(\tau))^2} \Bigg| \mathcal{M}} \right. \nonumber \\
& \quad{}\quad{} \left. + \sum_{\substack{n_1, n_2\in\mathcal{N}_\textrm{BS}; \\
\tau_1, \tau_2 \in [t-T_w, t-1]; \\
n_1 \neq n_2 \textrm{ and/or } \tau_1 \neq \tau_2}} 
\expect{\frac{m'_{n_1}(\tau_1)}{m_{n_1}(\tau_1)}\cdot \frac{m'_{n_2}(\tau_2)}{m_{n_2}(\tau_2)} \Bigg| \mathcal{M}}
 \right) \nonumber \\
& = \frac{1}{36 N^2_{\textrm{BS}} T^2_w} \left(
\sum_{n\in\mathcal{N}_\textrm{BS}} \sum_{\tau=t-T_w}^{t-1} 
\left(\frac{6r \cdot (1-6r)}{m_{n}(\tau)}  + 36r^2\right) \right. \\
& \quad\quad \left. + \sum_{\substack{n_1, n_2\in\mathcal{N}_\textrm{BS}; \\
\tau_1, \tau_2 \in [t-T_w, t-1]; \\
n_1 \neq n_2 \textrm{ and/or } \tau_1 \neq \tau_2}} 
36r^2 \right) \nonumber \\
& = \frac{1}{36 N^2_{\textrm{BS}} T^2_w} \left( 
6r \cdot (1-6r) \cdot \left( \sum_{n\in\mathcal{N}_\textrm{BS}} \sum_{\tau=t-T_w}^{t-1} 
\frac{1}{m_{n}(\tau)} \right) \right. \\
& \quad\quad \left. +N^2_{\textrm{BS}} T^2_w \cdot 36r^2
\right) \nonumber \\
& = \frac{r (1-6r)}{6 N^2_{\textrm{BS}} T^2_w} 
\left( \sum_{n\in\mathcal{N}_\textrm{BS}} \sum_{\tau=t-T_w}^{t-1} 
\frac{1}{m_{n}(\tau)} \right)
+r^2.
\end{align}
Subtracting $(\expect{\hat{r} | \mathcal{M}})^2 = r^2$ from the above yields the result.
\end{proof}

Lemma \ref{lemma:rEstVariance} gives the conditional variance of estimator $\hat{r}$ when the values of $m_n(\tau)$ are given. It is not straightforward to remove the condition, because it is hard to find the stationary distribution of user locations in a hexagonal 2-D mobility model with a finite number of cells. We note that the set of $m_n(\tau)$ values represents the set of samples in our estimation problem. In standard estimation problems, the sample size is usually deterministic, while it is random in our problem due to random user locations. This causes the difficulty in finding the unconditional variance of our estimator.

However, Lemma \ref{lemma:rEstVariance} is important because it gives us a sense on how large the gap between $\hat{r}$ and $r$ is, provided that each user precisely follows the random walk mobility model. It also enables us to proof Theorem~\ref{prop:rEstVarianceUpperBound}.

\begin{proof} (\textbf{Theorem  \ref{prop:rEstVarianceUpperBound}})
As discussed in Section \ref{sub:overallProcedureTraces},
we assume that $m_n (\tau) \neq 0$ for all $n$ and $\tau$. Thus, we always have $\frac{1}{m_{n}(\tau)} \leq 1$ (since $m_{n}(\tau)$ is a positive integer) and
\begin{equation}
\sum_{n\in\mathcal{N}_\textrm{BS}} \sum_{\tau=t-T_w}^{t-1} 
\frac{1}{m_{n}(\tau)} \leq N_{\textrm{BS}} T_w. 
\label{eq:rEstVarianceUpperBoundProof1}
\end{equation}
We also have the following bound:
\begin{equation}
r(1-6r) \leq \frac{1}{24}
\label{eq:rEstVarianceUpperBoundProof2}
\end{equation}
for any $r \in [0, \frac{1}{6}]$.
The law of total variance gives
\begin{equation}
\mathrm{Var} \{ \hat{r}\}= \expect{\mathrm{Var} \{ \hat{r} | \mathcal{M}\}} + \mathrm{Var} \{ \expect{ \hat{r} | \mathcal{M}}\}
\nonumber
\end{equation}
According to (\ref{eq:rEstVarianceProofCondExpect}), $\expect{ \hat{r} | \mathcal{M}}=r$ which is a constant, thus $\mathrm{Var} \{ \expect{ \hat{r} | \mathcal{M}}\}=0$.
Therefore, we have
\begin{equation}
\mathrm{Var} \{ \hat{r}\}
= \expect{\mathrm{Var} \{ \hat{r} | \mathcal{M}\}}
\leq \max_\mathcal{M} \mathrm{Var} \{ \hat{r} | \mathcal{M}\}
\leq \frac{1}{144 N_{\textrm{BS}} T_w}
\nonumber
\end{equation}
where the last inequality follows from substituting (\ref{eq:rEstVarianceUpperBoundProof1}) and (\ref{eq:rEstVarianceUpperBoundProof2}) into (\ref{eq:rEstVarianceProp1}).
\end{proof}

\section{Proof of Theorem \ref{prop:approxErrorBoundAddConstraints}}
\label{sec:ProofOfApproxErrorBoundAddConstraints}

Recall that for the original distance-based MDP $\{d(t)\}$ and its discounted sum cost, we have the following according to (\ref{eq:balanceDiscountedSumCost}):
\begin{equation}
V^*(d) = C_{a^*}(d) + \gamma \sum_{d(t+1)=0}^{N}  P\left[d'(t),d(t+1)\right]  V^*(d(t+1)) \label{eq:prop:approxErrorBoundAddConstraints:proofBalanceV}
\end{equation}
where $d'(t)=a^*(d)$. The transition probability $P\left[d'(t),d(t+1)\right]$ is defined according to Fig.~\ref{fig:states1D}. We always have $P\left[d'(t),d(t+1)\right]=0$ when $|d(t+1)-d'(t)|>1$.

We note that the distance-based MDP following the actual (randomized) action $a(d)$ is equivalent to an MDP following the optimal action $a^*(d)$ with modified transition probabilities. Let $\{\xi(t)\}$ denote such a modified MDP. There is a one-to-one mapping between the states in $\{\xi(t)\}$ and the states in $\{d(t)\}$, but the transition probabilities of these two MDPs are different. For convenience of comparison later, we use $\{d(t)\}$ to denote the states of $\{\xi(t)\}$ and use $P_\xi[\cdot,\cdot]$ to denote the transition probability of the MDP $\{\xi(t)\}$. We have
\begin{align}
& P_\xi\left[d'(t),d(t+1)\right] \nonumber \\
& =\!
\begin{cases}
(1-p_{0})\sum_{k\leq\Delta_t}\alpha_{k}+q\alpha_{\Delta_t+1}, \!&\! \textrm{if }d(t\!+\!1)\!=\!0\\
p_{0}\!\sum_{k\leq\Delta_t-1}\alpha_{k}\!+\!\nu\alpha_{\Delta_t}\!+\!q\alpha_{\Delta_t+1}, \!&\! \textrm{if }d(t\!+\!1)\!=\!1\\
p\alpha_{\Delta_t-1}+\nu\alpha_{\Delta_t}+q\alpha_{\Delta_t+1},  \!&\! \textrm{if } d(t\!+\!1)\!\in\![2, N\!\!-\!3] \\
p\alpha_{\Delta_t-1}\!+\!\nu\alpha_{\Delta_t}\!+\!q\sum_{k\geq\Delta_t+1}\alpha_{k},\! &\! \textrm{if } d(t\!+\!1)\!=\!N\!\!-\!2 \\
p\alpha_{\Delta_t-1}+\nu\sum_{k\geq\Delta_t}\alpha_{k}, \!&\! \textrm{if }d(t\!+\!1)\!=\!N\!\!-\!1 \\
p\alpha_{\Delta_t-1}, \!&\! \textrm{if }d(t\!+\!1)\!=\!N
\end{cases}
\label{eq:prop:approxErrorBoundAddConstraints:proof1}
\end{align}
where $\nu \triangleq 1-p-q$ and $\Delta_t \triangleq d(t+1)-d'(t)$. This transition probability expression is from the definition of $\alpha_k$ and the original distance-based MDP definition in Fig.~\ref{fig:states1D}. The above expression holds for $N\geq 5$; for smaller values of $N$, similar expressions can be derived and we omit the details.

By definition, $\tilde{V}(d)$ is equal to the discounted sum cost of this modified MDP $\{\xi(t)\}$. We have the following balance equation:
\begin{equation}
\tilde{V}(d) = C_{a^*}(d) + \gamma \sum_{d(t+1)=0}^{N}  P_\xi\left[d'(t),d(t+1)\right]  \tilde{V}(d(t+1)). \label{eq:prop:approxErrorBoundAddConstraints:proofBalanceU}
\end{equation}

Similar to the first part of the proof of Theorem~\ref{prop:approxErrorBound}, we assume that $\left|\tilde{V}(d)-V^*(d)\right|\leq\epsilon$ for all $d$ and the value of $\epsilon$ will be determined later. Let $\Psi(d(t+1)) \triangleq P_\xi\left[d'(t),d(t+1)\right]  - P\left[d'(t),d(t+1)\right]$. We have
\begin{align}
& \left|\tilde{V}(d)-V^*(d)\right| \nonumber \\
& = \gamma \Bigg| \sum_{d(t+1)=0}^{N} \!\!\!\!\! P_\xi\left[d'(t),d(t+1)\right]  \left(\tilde{V}(d(t+1)) - V^*(d(t+1)) \right) \nonumber \\
& \quad \quad + \sum_{d(t+1)=0}^{N}  \Psi(d(t+1)) V^*(d(t+1))   \Bigg| \nonumber \\
& \leq \gamma\epsilon + \gamma  \Bigg| \sum_{d(t+1)=0}^{N} \Psi(d(t+1)) V^*(d(t+1))   \Bigg| \label{eq:prop:approxErrorBoundAddConstraints:proof2}
\end{align}

We consider the second term in (\ref{eq:prop:approxErrorBoundAddConstraints:proof2}) in the following.
We first note that
\begin{align}
\left|V^*\left(\tilde{d}\right) - V^*(d)\right| \leq \max_x \left\{b\left(x + \left|\tilde{d}-d\right|\right) - b(x) \right\}  \label{eq:prop:approxErrorBoundAddConstraints:proof3}
\end{align}
for any $d$ and $\tilde{d} > d$, because otherwise, there exist some $\tilde{d}$ and $d$, for which (\ref{eq:prop:approxErrorBoundAddConstraints:proof3}) does not hold, where one can choose a new action that migrates from $\tilde{d}$ to $d$. When using this new action, (\ref{eq:prop:approxErrorBoundAddConstraints:proof3}) is satisfied (further migration may occur at state $d$, thus we take the maximum over $x$ in the expression) and the discounted sum cost $V^*(d)$ is reduced. This contradicts with the Bellman's equation in (\ref{eq:bellman}) since $V^*(d)$ is for the optimal policy.

We also note that 
\begin{align*}
& \sum_{d(t+1)=0}^{N} \Psi(d(t+1)) \\
& = \sum_{d(t+1)=0}^{N} P_\xi\left[d'(t),d(t+1)\right]  - \sum_{d(t+1)=0}^{N} P\left[d'(t),d(t+1)\right] \\
& = 1-1 = 0.
\end{align*}
Therefore, the sum of all positive $\Psi(d(t+1))$ is equal to the sum of all negative $\Psi(d(t+1))$ in (\ref{eq:prop:approxErrorBoundAddConstraints:proof2}).

Because $\alpha_k = 0$ for all $|k|>K$, according to the definition of $P_\xi\left[d'(t),d(t+1)\right]$ in (\ref{eq:prop:approxErrorBoundAddConstraints:proof1}), we have $P_\xi\left[d'(t),d(t+1)\right]=0$ when $\Delta_t + 1 < -K \leq 0$ or $\Delta_t - 1 > K \geq 0$.
It follows that $P_\xi\left[d'(t),d(t+1)\right]$ can be non-zero only when $|\Delta_t | \leq K+1$.
For the original distance-based MDP, its transition probability $P\left[d'(t),d(t+1)\right]$ can be non-zero only when $|\Delta_t | \leq 1$ according to the definition.
It follows that $\Psi(d(t+1))$ can only be negative when $|\Delta_t | \leq 1$.

Therefore, for any $d$ and $\tilde{d}$ such that $\Psi(d(t+1))>0$ and $\Psi(\tilde{d}(t+1))<0$, we must have 
\begin{equation}
\left| d(t+1) - \tilde{d}(t+1)  \right| \leq K+2. \label{eq:prop:approxErrorBoundAddConstraints:proof4}
\end{equation}
Expanding the sum in (\ref{eq:prop:approxErrorBoundAddConstraints:proof2}) into separate positive and negative terms, we have
\begin{align}
& \left| \sum_{d(t+1)=0}^{N} \Psi(d(t+1)) V^*(d(t+1)) \right| \nonumber \\
& = \Bigg| \sum_{d(t+1):\Psi(d(t+1))>0} \Psi(d(t+1)) V^*(d(t+1)) \nonumber \\
& \quad - \sum_{\tilde{d}(t+1):\Psi\left(\tilde{d}(t+1)\right)<0} \left|\Psi\left(\tilde{d}(t+1)\right)\right| \cdot V\left(\tilde{d}(t+1)\right)  \Bigg| \nonumber \\
& \leq \max_x \left\{b\left(x + K+2\right) - b(x) \right\} \label{eq:prop:approxErrorBoundAddConstraints:proof5}
\end{align}
where the last inequality is from (\ref{eq:prop:approxErrorBoundAddConstraints:proof3}), (\ref{eq:prop:approxErrorBoundAddConstraints:proof4}), and the fact that 
\begin{align*}
& \sum_{d(t+1):\Psi(d(t+1))>0} \Psi(d(t+1)) \\
& \quad\quad = \sum_{\tilde{d}(t+1):\Psi\left(\tilde{d}(t+1)\right)<0} \left|\Psi\left(\tilde{d}(t+1)\right)\right| \leq 1.
\end{align*}

Substituting (\ref{eq:prop:approxErrorBoundAddConstraints:proof5}) into (\ref{eq:prop:approxErrorBoundAddConstraints:proof2}), we get
\begin{align*}
\left|\tilde{V}(d)-V^*(d)\right| \leq  \gamma\epsilon + \gamma  \max_x \left\{b\left(x + K+2\right) - b(x) \right\}
\end{align*}

Because $0 < \gamma < 1$, $V^*(d)$ and $\tilde{V}(d)$ both converge to some fixed value (for each $d$) when iterating according to (\ref{eq:prop:approxErrorBoundAddConstraints:proofBalanceV}) and (\ref{eq:prop:approxErrorBoundAddConstraints:proofBalanceU}), respectively~\cite[Chapter 6]{puterman2009markov}. Hence, $\left|\tilde{V}(d)-V^*(d)\right|$ also converges to some fixed value (for each $d$). We can therefore solve $\epsilon$ as
\begin{equation*}
\epsilon = \frac{\gamma  \max_x \left\{b\left(x + K+2\right) - b(x) \right\}}{1-\gamma}
\end{equation*}
which completes the proof.

\section{ Additional Simulation Results with Real-World Traces}
\label{sec:AdditionalSimulationResults}

To consider the overall performance under different parameter settings, we denote the cost of the proposed method as $C$ and the cost of the baseline method under comparison as $C_0$, and we define the \emph{cost reduction} as $(C_0-C)/C_0$. Figs.~\ref{fig:simAvrHexNonConst}--\ref{fig:simAvrRealConst} show the cost reductions (averaged over the entire day) under different parameter settings. A positive cost reduction indicates that the proposed method performs better than the baseline, and the reverse is true for a negative cost reduction.

We can see that under the default number of ESs and capacity limit at each ES, the proposed approach is beneficial with average cost reductions of up to $44\%$ compared to the never/always-migrate or myopic policies. 

We also see that the cost reductions compared to never-migrate and myopic policies become small in the case where either the number of ESs is small or the capacity of each ES is low. In this case, it is hardly possible to migrate the service to a better location because the action space for migration is very small. 
Therefore, the proposed approach gives a similar cost as baseline policies, but still outperforms them as indicated by a positive cost reduction on average. The cost reduction compared to the always-migrate policy becomes slightly higher in the case of small action space, because the always-migrate policy always incurs migration cost even though the benefit of such migration is not obvious.

It is also interesting to see that while the average performance of the proposed approach is the best in almost all cases, there exist some instances in which the error bar (showing the standard deviation of the cost reductions computed on instantaneous costs) in Figs.~\ref{fig:simAvrHexNonConst}--\ref{fig:simAvrRealConst} goes below zero, indicating that the instantaneous cost of the proposed approach may be higher than some baseline approaches in some instances. This is possible because the proposed approach aims at minimizing the discounted sum cost defined in (\ref{eq:discountedSumCost}) and we use $\gamma = 0.9$ in the simulations. Thus, it may not minimize the instantaneous costs in all timeslots. The myopic policy, on the other hand, aims at minimizing the instantaneous cost. We note that setting $\gamma = 0$ makes our algorithm the same as the myopic policy. Hence, our algorithm is also capable of minimizing the instantaneous cost if the application desires so. However, in many practical scenarios, one would like to achieve a balance between the instantaneous cost and the long-term average cost, so that the service performance is reasonably good both instantaneously and on average. The discount factor  $\gamma$ acts as a control knob that adjusts this balance.

\begin{figure*}
\center{
\subfigure[]{\includegraphics[width=0.35\textwidth]{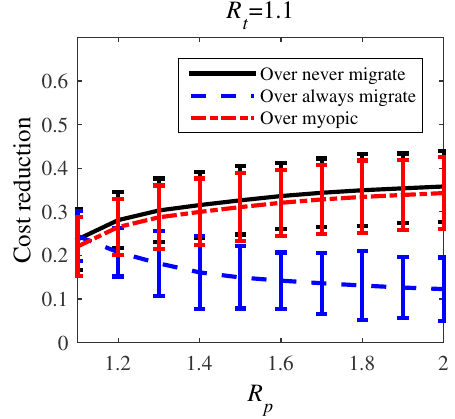}}
\subfigure[]{\includegraphics[width=0.35\textwidth]{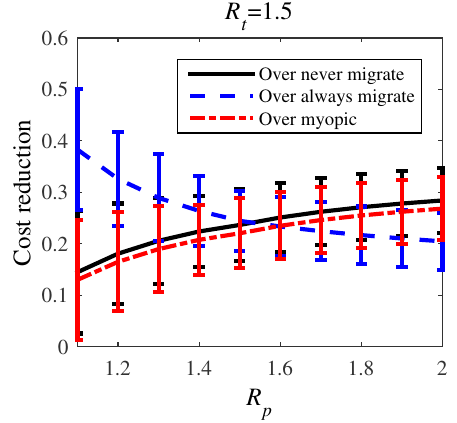}}
}

\center{
\subfigure[]{\includegraphics[width=0.35\textwidth]{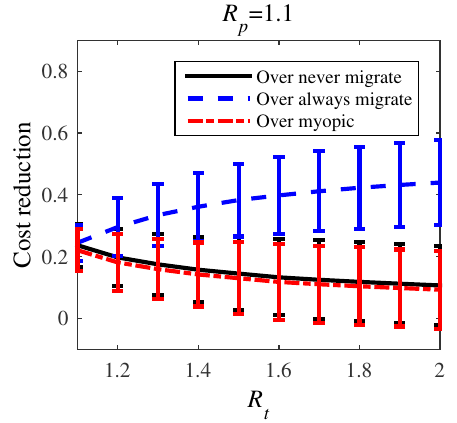}}
\subfigure[]{\includegraphics[width=0.35\textwidth]{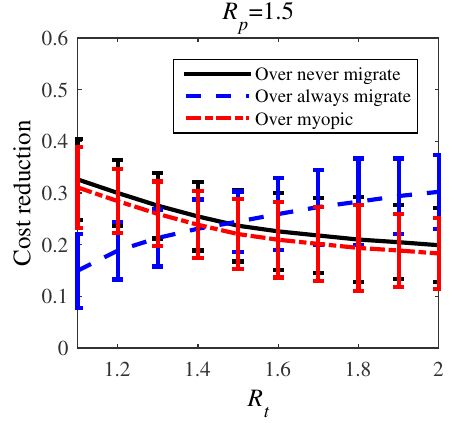}}
}

\center{
\subfigure[]{\includegraphics[width=0.35\textwidth]{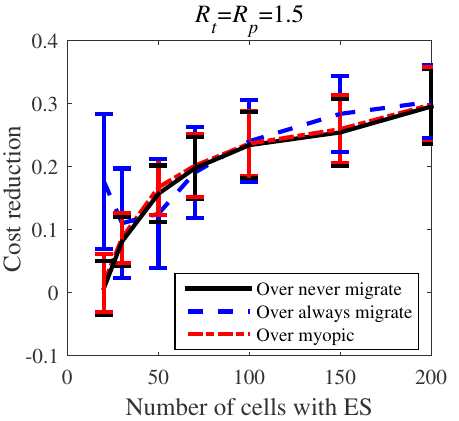}}
\subfigure[]{\includegraphics[width=0.35\textwidth]{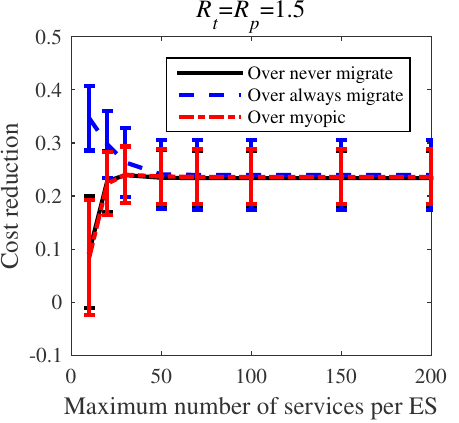}}
}

\protect\caption{(Hexagon, non-constant cost) Cost reduction (averaged over the entire day) compared to alternative policies in trace-driven simulation, the error bars denote the standard deviation (where we regard the cost reduction of instantaneous cost at different time of the day as samples): (a)--(b) cost reduction vs. different $R_t$,  (c)--(d) cost reduction vs. different $R_p$, (e) cost reduction vs. different number of cells with ES, (f) cost reduction vs. different capacity limit of each ES (expressed as the maximum number of services allowed per ES).}
\label{fig:simAvrHexNonConst} 
\end{figure*}

\begin{figure*}
\center{
\subfigure[]{\includegraphics[width=0.35\textwidth]{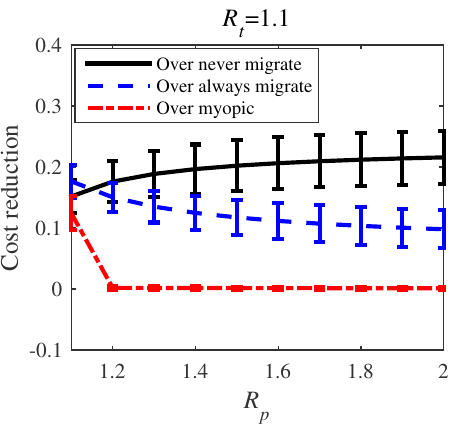}}
\subfigure[]{\includegraphics[width=0.35\textwidth]{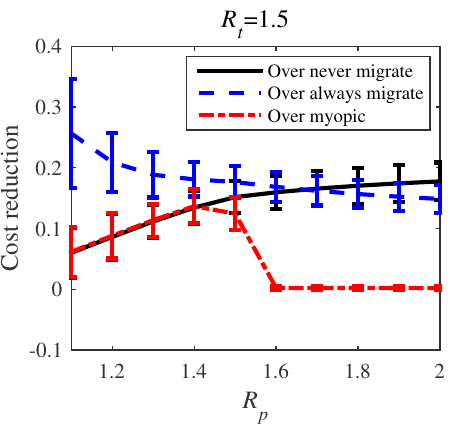}}
}

\center{
\subfigure[]{\includegraphics[width=0.35\textwidth]{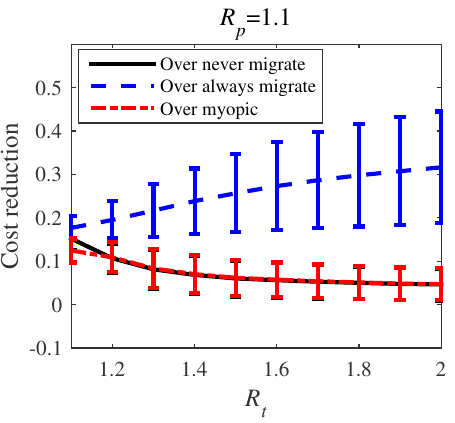}}
\subfigure[]{\includegraphics[width=0.35\textwidth]{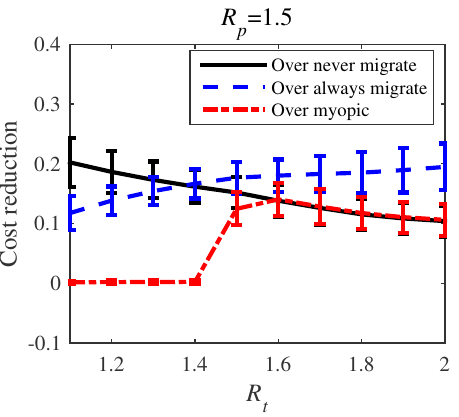}}
}

\center{
\subfigure[]{\includegraphics[width=0.35\textwidth]{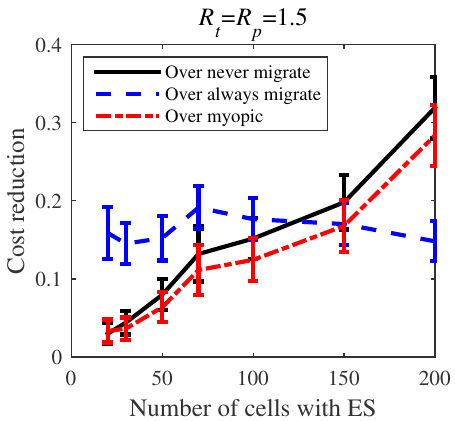}}
\subfigure[]{\includegraphics[width=0.35\textwidth]{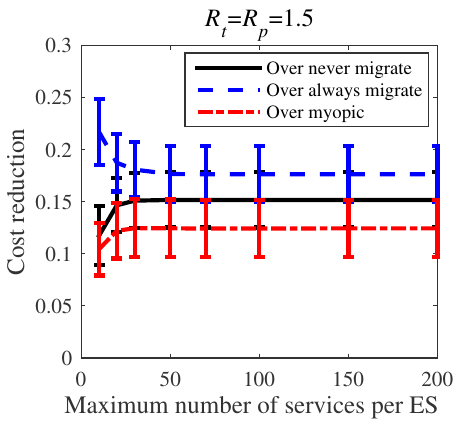}}
}

\protect\caption{(Hexagon, constant cost) Cost reduction (averaged over the entire day) compared to alternative policies in trace-driven simulation, the error bars denote the standard deviation (where we regard the cost reduction of instantaneous cost at different time of the day as samples): (a)--(b) cost reduction vs. different $R_t$,  (c)--(d) cost reduction vs. different $R_p$, (e) cost reduction vs. different number of cells with ES, (f) cost reduction vs. different capacity limit of each ES (expressed as the maximum number of services allowed per ES).}
\label{fig:simAvrHexConst} 
\end{figure*}

\begin{figure*}
\center{
\subfigure[]{\includegraphics[width=0.35\textwidth]{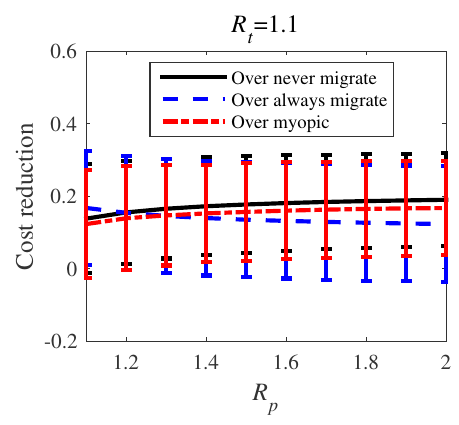}}
\subfigure[]{\includegraphics[width=0.35\textwidth]{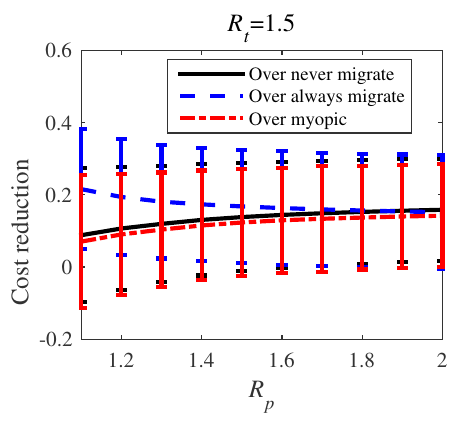}}
}

\center{
\subfigure[]{\includegraphics[width=0.35\textwidth]{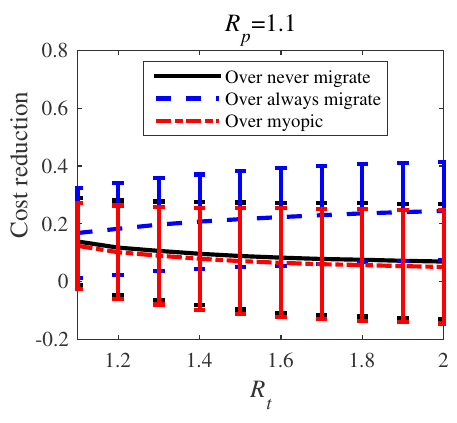}}
\subfigure[]{\includegraphics[width=0.35\textwidth]{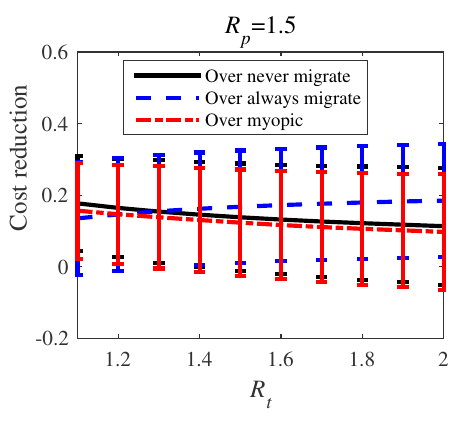}}
}

\center{
\subfigure[]{\includegraphics[width=0.35\textwidth]{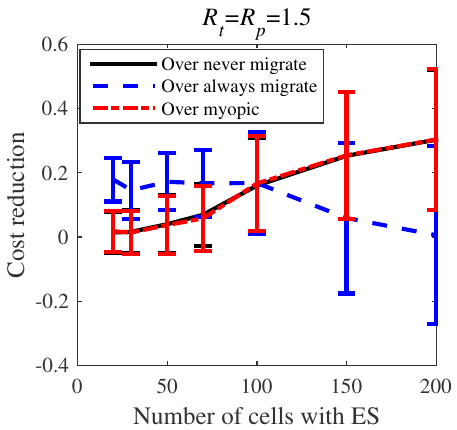}}
\subfigure[]{\includegraphics[width=0.35\textwidth]{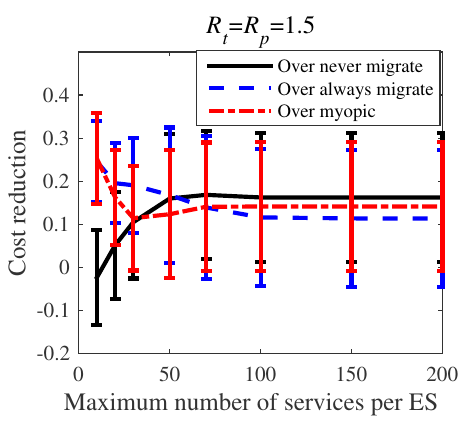}}
}

\protect\caption{(Real, non-constant cost) Cost reduction (averaged over the entire day) compared to alternative policies in trace-driven simulation, the error bars denote the standard deviation (where we regard the cost reduction of instantaneous cost at different time of the day as samples): (a)--(b) cost reduction vs. different $R_t$,  (c)--(d) cost reduction vs. different $R_p$, (e) cost reduction vs. different number of cells with ES, (f) cost reduction vs. different capacity limit of each ES (expressed as the maximum number of services allowed per ES).}
\label{fig:simAvrRealNonConst} 
\end{figure*}

\begin{figure*}
\center{
\subfigure[]{\includegraphics[width=0.35\textwidth]{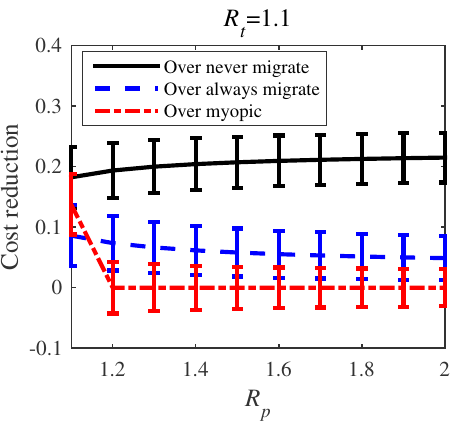}}
\subfigure[]{\includegraphics[width=0.35\textwidth]{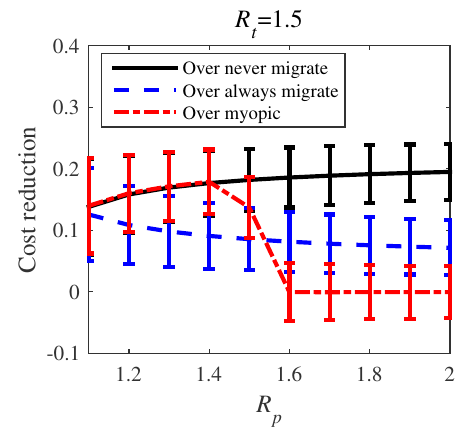}}
}

\center{
\subfigure[]{\includegraphics[width=0.35\textwidth]{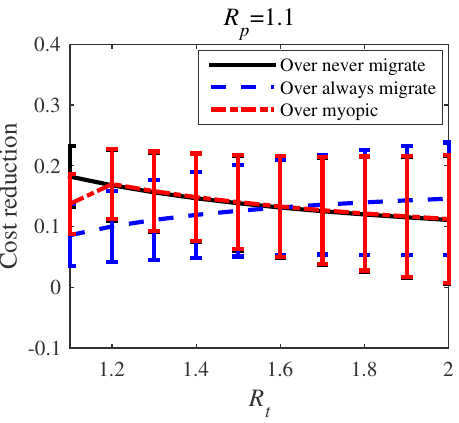}}
\subfigure[]{\includegraphics[width=0.35\textwidth]{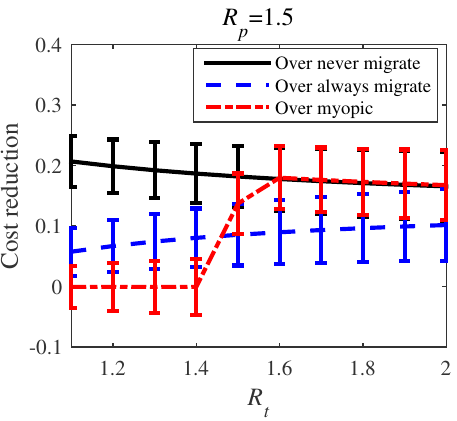}}
}

\center{
\subfigure[]{\includegraphics[width=0.35\textwidth]{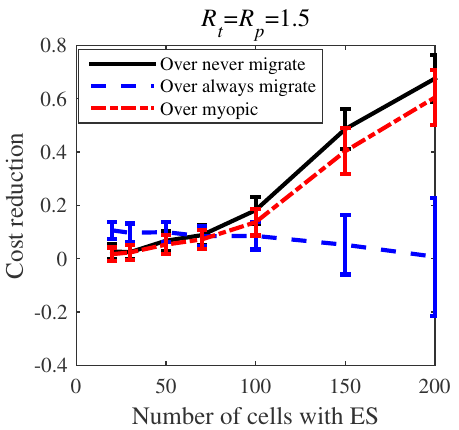}}
\subfigure[]{\includegraphics[width=0.35\textwidth]{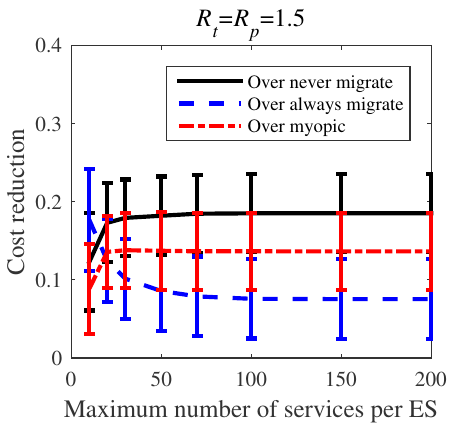}}
}

\protect\caption{(Real, constant cost) Cost reduction (averaged over the entire day) compared to alternative policies in trace-driven simulation, the error bars denote the standard deviation (where we regard the cost reduction of instantaneous cost at different time of the day as samples): (a)--(b) cost reduction vs. different $R_t$,  (c)--(d) cost reduction vs. different $R_p$, (e) cost reduction vs. different number of cells with ES, (f) cost reduction vs. different capacity limit of each ES (expressed as the maximum number of services allowed per ES).}
\label{fig:simAvrRealConst} 
\end{figure*}

\end{document}